\newtheorem{thm}{Theorem}
\newtheorem{lem}[thm]{Lemma}
\newtheorem{cor}[thm]{Corollary}
\newtheorem{prop}[thm]{Proposition}
\newtheorem{rem}[thm]{Remark}
\newtheorem{defn}[thm]{Definition}
\newtheorem{ex}[thm]{Example}
\def\00{{\bf 0}}
\def\11{{\bf 1}}
\newcommand\dsum{\displaystyle \sum}
\newcommand\dprod{\displaystyle \prod}
\newcommand\dbigoplus{\displaystyle \bigoplus}
\def\wt{{\rm wt}}
\def \f  {\mathbb{F}}
\newcommand{\D}{\mathrm{D}}
\newcommand{\dd}{\mathrm{dd}}
\newcommand{\add}{\mathrm{add}}
\newcommand{\dt}{\mathrm{dt}}
\begin{document}
\title{Probabilistic estimation of the algebraic degree
of Boolean functions}


\author{ Ana S\u{a}l\u{a}gean \and Percy Reyes-Paredes}
\institute{Department of Computer Science, Loughborough University, UK\
\email{A.M.Salagean@lboro.ac.uk, A.P.Reyes-Paredes@lboro.ac.uk}}

\date{}

\maketitle

\begin{abstract}
The algebraic degree is an important parameter of Boolean functions used in cryptography. When a function in a large number of variables is not given explicitly in algebraic normal form, it might not be feasible to compute its degree. Instead, one can try to estimate the degree using probabilistic tests.

We propose a probabilistic test for deciding whether the algebraic degree of a Boolean function $f$ is below a certain value $k$.
The test involves picking an affine space of dimension
$k$ and testing whether the values on $f$ on that space sum up to zero. If $\deg(f)<k$, then $f$ will always pass the test, otherwise it will sometimes pass and sometimes fail the test, depending on which affine space was chosen. The probability of failing the proposed test is closely related to the number of monomials of degree $k$ in a polynomial $g$, averaged over all the polynomials $g$ which are affine equivalent to $f$.

We initiate the study of the probability of failing the proposed ``$\deg(f)<k$'' test. We show that in the particular case when the degree of $f$ is actually equal to $k$, the probability will be in the interval $(0.288788, 0.5]$, and therefore a small number of runs of the test is sufficient to give, with very high probability, the correct answer. Exact values of this probability  for all the polynomials in 8 variables were computed using the representatives listed by Hou and by Langevin and Leander.
\end{abstract}
{\bf Keywords}: Algebraic degree, Moebius transform, probabilistic testing, algebraic thickness

\section{Introduction and motivation}
The algebraic degree is an important parameter of Boolean functions used in cryptography. A Boolean function $f$ in $n$ variables can be uniquely represented in ANF (algebraic normal form), i.e.\ as a polynomial in $n$ variables over $\f_2$ (the binary field) of degree at most one in each variable. The degree of this polynomial is called the algebraic degree of $f$. Ciphers which can be represented (or approximated) as functions of low degree are vulnerable to attacks such as higher order differential attacks.

When the ANF of $f$ is not given explicitly (e.g. $f$ is a composition of functions, or is given as a ``black box'') and depends on a large number of variables, it may not be  feasible to compute its degree. Instead, we aim to estimate the degree using probabilistic tests.

The coefficient of a particular monomial $x_{i_1}\cdots x_{i_k}$ of degree $k$ in the ANF  of $f$ can be computed by summing the values of $f$ over the vector space generated by the $k$ vectors $\mathbf{e_{i_1}}, \ldots, \mathbf{e_{i_k}}$ in the canonical basis. This method,  sometimes called the Moebius transform, has many applications in cryptography and coding theory; in cryptanalysis it was used to detect and exploit non-randomness features of the number of monomials of a given degree
(see \cite{Fil02,ONe07,DinSha09,Vie07}).

One could use the Moebius transform to estimate the degree of a function as follows: pick a monomial of degree $k$, compute its coefficient in $f$ and test whether it is zero.
If it is not, then we know that $\deg(f) \ge k$.
If we run this test for several monomials of degree $k$ and all the computed coefficients are zero, then we conclude that $f$ probably has degree strictly less than $k$.
 The probability of finding a monomial of degree $k$ (and correctly concluding $\deg(f) \ge k$) in one run of the test is equal to the proportion of monomials of degree $k$ that have non-zero coefficients in $f$. Therefore, this method has  the shortcoming that
 if $f$ has degree at least $k$ but only has a very small number of monomials of degree $k$, then one might  incorrectly classify $f$ as having degree less than $k$, as illustrated in the following example:
\begin{ex}\label{ex:old-test}
  The function $f(x_1,\ldots, x_9) =x_1x_2x_3 +   x_4x_5x_6 +x_7x_8x_9$ has only 3 out of the $\binom{9}{3} = 84$ possible monomials of degree 3 in 9 variables. Assume we run the test based on the Moebious transform, to search for monomials of degree 3. Each run of the test has a probability of only $\frac{3}{84}\approx 0.0357$ to detect a monomial. After running the test 9 times, for example, we have still a rather high  probability of $(1-\frac{3}{84})^9 \approx 0.72$ that no monomial of degree $3$ has been detected yet by the test, and therefore we might wrongly conclude that $\deg(f) < 3$.
\end{ex}

In this paper we generalise this idea. The intuition behind our proposed method to test whether $\deg(f) < k$ is that even if a function $f$ has a very small number of monomials of degree $k$, after applying a random affine invertible change of variables to $f$ (the degree of $f$ being invariant to such changes of variables), the number of monomials of degree $k$ is likely to be high and therefore it will be easier to probabilistically detect their existence. The aim is that the test should perform reasonably well for \emph{all} functions.

We will call our proposed test the $\deg(f)<k$ test and define it as follows. Pick
$u_0, u_1,\ldots, u_k \in \f_2^n$ and check whether the sum of the values of $f$ over the affine space $u_0 \oplus \langle u_1, \ldots, u_k \rangle$ is zero. Again, given a function $f$ we run the test several times. If it passes all test we conclude that $\deg(f)<k$, otherwise we conclude $\deg(f)\ge k$.
  A function $f$ of degree less than $k$ will always pass this test (there are no false negatives). A function of degree $k$ or more will sometimes fail and sometimes pass the test, depending on the chosen vectors.
We denote by $\dt_k(f)$ the probability of failing this test, taken over all values $u_0, u_1,\ldots, u_k \in \f_2^n$. This probability determines the probability $(1-\dt_k(f))^t$ of wrongly concluding, after $t$ tests, that $\deg(f)<k$ when in fact $\deg(f)\ge k$ (false positive). Ideally, $\dt_k(f)$ should not be very low. A very small value of $\dt_k(f)$ would mean that we would need to run the test a very large number of times to obtain a reasonable accuracy.

We initiate the study of the probability $\dt_k(f)$ of failing the $\deg(f)<k$ test. We consider the case when the degree of $f$ is in fact equal to $k$ (although we would not know this beforehand, if we knew we would not need to do any test). We prove in Theorem~\ref{thm:bounds} and Corollary~\ref{cor:bounds} that  the probability $\dt_k(f)$ satisfies an upper bound of 0.5 and a lower bound of 0.288788... (q-Pochhammer symbol at $(0.5, 0.5, \infty)$). This means there are no functions with very low probability $\dt_k(f)$, and therefore a small number of runs of the test is sufficient to give, with very high probability, the correct answer. For example, to a obtain a probability of less than 0.05 that a polynomial of degree $k$ has been incorrectly classified as being of degree less than $k$, we would need to run the test 9 times. Contrast this with the situation illustrated in Example~\ref{ex:old-test}.

We compute and analyse the values of the probability of failing the $\deg(f)<k$ test for all functions in 8 variables of degree $k$,
using the representatives listed by Hou~\cite{Hou96} and by Langevin and Leander~\cite{LanLea07} (see Section~\ref{sec:numeric}).

The study of the probability of failing the $\deg(f)<k$ test for polynomials of degree strictly higher than $k$ will be the subject of future work.

The probability $\dt_k(f)$ is connected to other existing notions as follows. For $k=2$, i.e. the $\deg(f)<2$  test, if we restrict to linear rather than affine spaces, we obtain the usual textbook linearity test $f(u_1 \oplus u_2) = f(u_1) \oplus f(u_2)$, often called the BLR test. The probability of failing the BLR test was studied in several papers, see for example~\cite{BCHKS96}.  In~\cite{WinSal15}, in the context of the cube/AIDA attack, we proposed a linearity test similar to the $\deg(f)<k$ test above, but fixing a linear space of dimension $k$ and running the $\deg(f)<m$ test on all its subspaces of dimension $m$, for all $2\le m \le k$.

We show in Theorem~\ref{thm:relation-between-the-two-notions} that the probability of failing the $\deg(f)<k$ test, when restricted to
affine spaces of dimension exactly $k$,
is equal to the average number of monomials of degree $k$ over all the polynomials in the affine equivalence class of $f$. We propose this average density of monomials of degree $k$ as a new parameter of Boolean functions (see Definition~\ref{def:add}). It is somewhat similar, but distinct from the notion of algebraic thickness defined in~\cite{Car02}, see Remark~\ref{rem:thickness}.

\section{Definitions}
We denote by $\f_2$ the finite field with 2 elements, and by $\f_2^n$ the $n$-dimensional vector space over $\f_2$. Addition in $\f_2$ and in $\f_2^n$ will be denoted by $\oplus$.

Any function $f:\f_2^n \rightarrow \f_2$ can be represented in its algebraic normal form (ANF), i.e. as a polynomial function given by a polynomial of degree at most 1 in each variable:
\[f(x_1, \ldots, x_n)  = \dbigoplus  _{a_1, \ldots, a_n \in \f_2} b_{(a_1, \ldots, a_n)} x_1^{a_1}\cdots x_n^{a_n},\]
with $b_{(a_1, \ldots, a_n)} \in \f_2$.
The degree of this polynomial is called the algebraic degree of $f$, and here we will call it simply the degree of $f$ and denote it by $\deg(f)$.

The coefficients of the ANF of $f$ can be computed by the following formula (see, for example, \cite[Chapter~13, Theorem~1]{McWS}), which is sometimes called the Moebius transform:
\begin{equation}\label{eq:Moebius1}
  b_{(a_1, \ldots, a_n)} = \dbigoplus  _{x_1 \le a_1, \ldots, x_n \le a_n}f(x_1, \ldots, x_n).
\end{equation}
An equivalent form of this formula can be obtained as follows: let $\{i_1,i_2,\ldots ,i_k \}$ be the support of $a$, i.e. $a_i=1$ if and only if $i \in \{i_1,i_2,\ldots ,i_k \}$.
In other words $b_{(a_1, \ldots, a_n)}$ is the coefficient of $x_{i_1}x_{i_2}\cdots x_{i_k}$. Denote by $\mathbf{e}_1, \ldots,  \mathbf{e}_n$ the canonical basis of $\f_2^n$, i.e. $\mathbf{e}_i$ has a 1 in position $i$ and zeroes elsewhere. Then
\begin{eqnarray}\label{eq:Moebius2}
  b_{(a_1, \ldots, a_n)} &=& \dbigoplus  _{c_1, \ldots, c_k \in \f_2} f(\dbigoplus _{j=1}^{k} c_j \mathbf{e}_{i_j})\\
  &=& \dbigoplus _{v \in V} f(v)
\end{eqnarray}
where $V = \langle \mathbf{e}_{i_1}, \ldots, \mathbf{e}_{i_k}\rangle$ is the $\f_2$-vector space generated by $\mathbf{e}_{i_1}, \ldots, \mathbf{e}_{i_k}$.

Recall that the number of subspaces of dimension $k$ of $\f_2^n$ equals the Gaussian binomial coefficient
\[
\binom{n}{k}_2 = \frac{(2^n-1)(2^{n-1}-1)\cdots (2^{n-k+1}-1)}{(2^k-1)(2^{k-1}-1)\cdots (2-1)}.
\]
Consider the general linear group $GL(n,\f_2)$, consisting of the invertible $n\times n$ matrices over $\f_2$.
 For any matrix $M\in GL(n,\f_2)$ and any $v\in \f_2^n$ we will denote by $\varphi_{M}$ and $\varphi_{M,v}$ the invertible linear, respectively affine transformation of $\f_2^n$ defined as $\varphi_{M}(x) = Mx$ and $\varphi_{M,v}(x) = Mx\oplus v$ respectively. There are $(2^n-1)(2^n-2)\cdots (2^n-2^{n-1})$ invertible linear transformations and $2^n(2^n-1)(2^n-2)\cdots (2^n-2^{n-1})$ affine ones.

Two functions $f,g:\f_2^n \rightarrow \f_2$ are called affine equivalent, denoted $f\sim g$, if $g = f \circ \varphi_{M,v}$ for some invertible affine transformation $\varphi_{M,v}$. Recall that the degree is an affine invariant, i.e. if $f\sim g$ then $\deg(f) = \deg(g)$.

Later in the paper there will be situations where only the monomials of degree $k$ or more of a polynomial are relevant, and any monomials of lower degree can be ignored. Combining that with affine equivalence, we also define the equivalence relation $f\sim_{k-1} g$ by saying that  $f\sim_{k-1} g$ if there is a function $h$ such that $f\sim h$ and $\deg(g-h)\le k-1$ (i.e. $g$ and $h$ coincide if we ignore any monomials of degree less than $k$).

\section{Degree testing  and the degree density}
We will define two notions: the ``degree less than $k$'' probabilistic test and the ``average degree-$k$ monomial density'' of a function. We will then examine the relations between them.

\begin{defn}\label{def:dt}
  Let $1\le k\le n$ be integers and let $f:\f_2^n \rightarrow \f_2$ be a function. Given $u_0, u_1, \ldots, u_k\in \f_2^n$, we will call the test \[\dbigoplus _{c_1,\ldots,c_k \in \f_2} f\left(\left(\dbigoplus _{i=1}^{k} c_iu_i\right) \oplus u_0 \right)=0\]
   the {\em degree less than $k$ test}, or $\deg(f)<k$ test.
  The probability of $f$ failing this test, taken over all $u_0, u_1, \ldots, u_k\in \f_2^n$ will be denoted $\dt_k(f)$.
In other words
\begin{equation}\label{eq:deg-test}
\dt_k(f) = \frac{|\{ (u_0, u_1, u_2, \ldots, u_k) \in (\f_2^n)^k : \dbigoplus  _{c_1,\ldots,c_k \in \f_2} f\left(\left(\dbigoplus _{i=1}^{k} c_iu_i\right) \oplus u_0 \right)  \neq 0\}|}{2^{(k+1)n}}.
\end{equation}
\end{defn}
\begin{rem}\label{rem:lin-indep}
  It is not hard to verify that if $u_1, u_2, \ldots, u_k$ are linearly dependent then any function $f$ passes that particular $\deg(f) < k$ test.
Therefore, in practice there is no need to run the test when  they are linearly dependent. We could therefore define the test either with, or without the requirement that $u_1, u_2, \ldots, u_k$ are linearly independent. We decided that both probabilities of failure, with or without the requirement
are of interest. There are at least two reasons why the probability of failure without the requirement that $u_1, u_2, \ldots, u_k$ are linearly independent, which we denoted $\dt_k(f)$, is of interest. Firstly, the case $k=2$ and $u_0=0$ corresponds to what is usually called the BLR test; the probability of failing the BLR test is defined without the requirement that $u_1,u_2$ should be linearly independent (see for example~\cite{BCHKS96}). Secondly, as we shall see in Proposition~\ref{prop:basic-prop1}(i), the value of $\dt_k(f)$ does not change if the function $f$ in $n$ variables is viewed as a function in more than $n$ variables. By contrast, this value would change if the definition required linearly independent vectors.
The probability of failing the test when we require that $u_1, u_2, \ldots, u_k$ are linearly independent, equals the quantity $\add_k(f)$ in Definition~\ref{def:add}, see Theorem~\ref{thm:relation-between-the-two-notions}.
\end{rem}

\begin{defn}\label{def:add}
  Let $0\le k\le n$ be integers and let $f:\f_2^n \rightarrow \f_2$ be a function. The degree-$k$ monomial density of $f$, denoted $\dd_k(f)$ is defined as the number of monomials of degree $k$ in the ANF of $f$, divided by $\binom{n}{k}$ (the total number of monomials of degree $k$ in $n$ variables) i.e. if the ANF of $f$ is $f(x) = \dbigoplus  _{t} b_t t$, with $t$ ranging over all monomials in $n$ variables and $b_t \in \f_2$ then
  \begin{equation*}
  \dd_k(f) = \frac{|\{t: t \mbox{ monomial of degree } k \mbox{ and } b_t\ne 0\}|}{\binom{n}{k}}.
  \end{equation*}
  The average degree-$k$  monomial density of $f$, denoted by $\add_k(f)$ is the average (arithmetic mean) of $\dd_k(g)$ of all the functions $g$ such that $f \sim g$, i.e.
  \begin{equation}\label{eq:add-def}
  \add_k(f) = \frac{\dsum_{g \sim f}\dd_k(g)}{|\{g: g \sim f \}|} = \frac{\dsum _{M \in GL(n, \f_2), v\in \f_2^n} \dd_k(f\circ \varphi_{M,v})}{2^n(2^n-1)(2^n-2)\cdots (2^n-2^{n-1})}.
  \end{equation}
\end{defn}
\begin{rem}
  The two ways of defining $\add_k(f)$ in equation (\ref{eq:add-def}) are indeed equal. Namely, denote by $A$ the cardinality of the stabilizer of $f$ under the action of invertible affine transformations of $\f_2^n$, i.e. $A = |\{\varphi_{M,v}: f\circ \varphi_{M,v} = f\}|$. Each element $g$ such that $g\sim f$ is obtained as $f\circ \varphi_{M,v}$ for $A$ distinct transformations $\varphi_{M,v}$. Therefore
  $2^n(2^n-1)(2^n-2)\cdots (2^n-2^{n-1}) = A |\{g: g \sim f \}|$ and
  $ \dsum _{M \in GL(n, \f_2), v\in \f_2^n} \dd_k(f\circ \varphi_{M,v}) = A \dsum_{g \sim f}\dd(g)$.
\end{rem}
\begin{rem}\label{rem:thickness}
  The notion of average degree-$k$ monomial density has some similarity, but is different from the algebraic thickness of a function $f$, defined in~\cite{Car02}. The algebraic thickness is defined as the minimum number of monomials among all the functions $g$ such that $f \sim g$. Both notions look at the number of monomials, but the average degree density looks at monomials of a given degree while the algebraic thickness looks at monomials of all degrees. Also, while both notions look at the whole equivalence class of $f$, the average degree density computes the average while the algebraic thickness computes the minimum.
\end{rem}

The average degree-$k$ monomial density is closely connected to the probability of failing the $\deg(f)<k$ test. After a preliminary lemma, we will give the exact relationship in the next theorem.
\begin{lem}\label{lem:lin-transf-monomial}
Let $M\in GL(n,\f_2)$ be an  invertible matrix and let $1\le i_1<\ldots < i_k \le n$. Denote by $u_1, \ldots, u_k\in \f_2^n$ the linearly independent vectors which appear in $M$ as columns $i_1 , \ldots, i_k$, respectively. Also, let $u_0\in \f_2^n$. Then the coefficient of $x_{i_1}\cdots x_{i_k}$ in $f\circ \varphi_{M,u_0}$ equals $\dbigoplus  _{c_1,\ldots,c_k \in \f_2} f\left(\left(\dbigoplus _{i=1}^{k} c_iu_i\right) \oplus u_0\right)$.
\end{lem}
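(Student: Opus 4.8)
The plan is to apply the Moebius transform formula~(\ref{eq:Moebius2}) directly to the composed function $h := f\circ \varphi_{M,u_0}$, which is itself a function $\f_2^n\to\f_2$, and then to push the linear map $M$ through the sum. First I would recall that, by~(\ref{eq:Moebius2}) applied with $a$ the characteristic vector of the set $\{i_1,\ldots,i_k\}$, the coefficient of $x_{i_1}\cdots x_{i_k}$ in the ANF of any function $h:\f_2^n\to\f_2$ equals $\dbigoplus_{v\in V} h(v)$, where $V=\langle \mathbf{e}_{i_1},\ldots,\mathbf{e}_{i_k}\rangle$; equivalently it equals $\dbigoplus_{c_1,\ldots,c_k\in\f_2} h\bigl(\dbigoplus_{j=1}^k c_j\mathbf{e}_{i_j}\bigr)$.

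Next I would substitute $h=f\circ\varphi_{M,u_0}$, so that the coefficient in question becomes $\dbigoplus_{c_1,\ldots,c_k\in\f_2} f\bigl(\varphi_{M,u_0}\bigl(\dbigoplus_{j=1}^k c_j\mathbf{e}_{i_j}\bigr)\bigr) = \dbigoplus_{c_1,\ldots,c_k\in\f_2} f\bigl(M\bigl(\dbigoplus_{j=1}^k c_j\mathbf{e}_{i_j}\bigr)\oplus u_0\bigr)$. Using $\f_2$-linearity of $M$ and the fact that $M\mathbf{e}_{i_j}$ is, by definition, the $i_j$-th column of $M$, namely $u_j$, the inner argument equals $\bigl(\dbigoplus_{j=1}^k c_j u_j\bigr)\oplus u_0$, which yields exactly the claimed expression.

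The only points requiring care are bookkeeping ones: checking that the Moebius formula~(\ref{eq:Moebius2}), stated for the ANF of a function on $\f_2^n$, applies verbatim to $h$ (it does, since $h$ is again a function $\f_2^n\to\f_2$), and keeping the indexing consistent so that the $j$-th generator $\mathbf{e}_{i_j}$ is sent to the $i_j$-th column $u_j$. I do not expect any genuine obstacle here; the statement is essentially the observation that composing with $\varphi_{M,u_0}$ transports the ``canonical'' affine space $u_0\oplus\langle\mathbf{e}_{i_1},\ldots,\mathbf{e}_{i_k}\rangle$ used by the Moebius transform to the affine space $u_0\oplus\langle u_1,\ldots,u_k\rangle$, so that summing $f$ over the latter computes the relevant coefficient after the change of variables.
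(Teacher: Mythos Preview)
Your proposal is correct and follows essentially the same argument as the paper: apply the Moebius formula~(\ref{eq:Moebius2}) to $h=f\circ\varphi_{M,u_0}$, then use linearity of $M$ together with $M\mathbf{e}_{i_j}=u_j$ to simplify the argument of $f$. The only (inconsequential) slip is in your closing informal sentence, where the space being transported should be $\langle \mathbf{e}_{i_1},\ldots,\mathbf{e}_{i_k}\rangle$ rather than $u_0\oplus\langle \mathbf{e}_{i_1},\ldots,\mathbf{e}_{i_k}\rangle$; the $u_0$ shift appears only after applying $\varphi_{M,u_0}$.
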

\begin{proof}
Using \eqref{eq:Moebius2} and the fact that $M\mathbf{e_{i_j}} = u_j$, we see that the coefficient $b$ of $x_{i_1}\cdots x_{i_k}$ in $f\circ \varphi_{M,u_0}$ equals
\begin{eqnarray*}
  b &=&  \dbigoplus _{c_1,\ldots,c_k \in \f_2} f\left(\varphi_{M,u_0}\left(\dbigoplus _{j=1}^{k} c_j \mathbf{e_{i_j}}\right)\right) \\
   &=& \dbigoplus _{c_1,\ldots,c_k \in \f_2} f\left(M\left(\dbigoplus _{j=1}^{k} c_j \mathbf{e_{i_j}}\right)\oplus u_0\right)  \\
   &=& \dbigoplus _{c_1,\ldots,c_k \in \f_2} f\left(\left(\dbigoplus _{j=1}^{k} c_j u_j\right)\oplus u_0\right).
\end{eqnarray*}
\end{proof}
\begin{thm}\label{thm:relation-between-the-two-notions}
The average degree-$k$ monomial density of a function $f$ equals the probability of failing the test $\dbigoplus_{c_1,\ldots,c_k \in \f_2} f\left(\left(\dbigoplus_{i=1}^{k} c_iu_i\right)\oplus u_0\right) =0$ over all those $(u_0,u_1, u_2, \ldots, u_k) \in (\f_2^n)^{k+1}$ for which the vectors $(u_1, u_2, \ldots, u_k)$ are linearly independent. In other words:
   \begin{eqnarray}
 \dt_k(f)  &=&  \add_k(f) \dprod_{i=n-k+1}^n\left( 1 - \frac{1}{2^{i}}\right).
  \end{eqnarray}
\end{thm}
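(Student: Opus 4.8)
The plan is to prove the statement in two stages. In the first and main stage I would show that the probability of failing the test, restricted to those $(u_0,u_1,\ldots,u_k)$ for which $u_1,\ldots,u_k$ are linearly independent, equals $\add_k(f)$; this is the combinatorial core. In the second stage I would invoke Remark~\ref{rem:lin-indep}: if $u_1,\ldots,u_k$ are linearly dependent the test is always passed, so every \emph{failing} tuple automatically has linearly independent $u_i$'s. Converting the restricted failure probability into $\dt_k(f)$ then amounts only to replacing the denominator (the number of admissible tuples) by $2^{(k+1)n}$, and simplifying the resulting ratio into $\dprod_{i=n-k+1}^n(1-2^{-i})$.

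For the first stage, Lemma~\ref{lem:lin-transf-monomial} is the bridge. Starting from the $GL$-form of $\add_k(f)$ in \eqref{eq:add-def}, I would write $\dd_k(f\circ\varphi_{M,u_0})$ as $\binom{n}{k}^{-1}$ times the number of $k$-subsets $S=\{i_1<\cdots<i_k\}$ for which the coefficient of $x_{i_1}\cdots x_{i_k}$ in $f\circ\varphi_{M,u_0}$ is nonzero. By Lemma~\ref{lem:lin-transf-monomial} this coefficient equals $\dbigoplus_{c_1,\ldots,c_k\in\f_2}f\bigl((\dbigoplus_{i=1}^k c_iu_i)\oplus u_0\bigr)$, where $u_1,\ldots,u_k$ are the columns of $M$ occupying positions $i_1,\ldots,i_k$; crucially it depends only on $u_0$ and on the ordered linearly independent tuple $(u_1,\ldots,u_k)$, and not on $S$ or on the other columns of $M$. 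Summing over all $M\in GL(n,\f_2)$ and all $k$-subsets $S$, I would then reorganise the double sum as a sum over ordered linearly independent $k$-tuples $(u_1,\ldots,u_k)$, each occurring with the same multiplicity: it is produced by $\binom{n}{k}$ choices of the position set $S$, times $(2^n-2^k)(2^n-2^{k+1})\cdots(2^n-2^{n-1})$ ways of completing $u_1,\ldots,u_k$ (placed in the slots of $S$) to an invertible matrix.

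Combining these, $\dsum_{M,u_0}\dd_k(f\circ\varphi_{M,u_0})$ equals $(2^n-2^k)\cdots(2^n-2^{n-1})$ times the number $F$ of tuples $(u_0,u_1,\ldots,u_k)$ with $u_1,\ldots,u_k$ linearly independent that fail the test. Dividing by the denominator $2^n(2^n-1)(2^n-2)\cdots(2^n-2^{n-1})$ from \eqref{eq:add-def}, the factor $(2^n-2^k)\cdots(2^n-2^{n-1})$ cancels and $\add_k(f)=F/\bigl(2^n(2^n-1)(2^n-2)\cdots(2^n-2^{k-1})\bigr)$; since the denominator here is exactly the number of admissible tuples, this is the restricted failure probability. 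For the second stage, Remark~\ref{rem:lin-indep} gives $\dt_k(f)=F/2^{(k+1)n}$, hence $\dt_k(f)=\add_k(f)\cdot\frac{(2^n-1)(2^n-2)\cdots(2^n-2^{k-1})}{2^{kn}}=\add_k(f)\prod_{j=0}^{k-1}\bigl(1-2^{j-n}\bigr)=\add_k(f)\prod_{i=n-k+1}^{n}\bigl(1-2^{-i}\bigr)$, which is the claimed identity.

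I expect the main obstacle to be the multiplicity bookkeeping in the middle stage: showing precisely that each ordered linearly independent $k$-tuple is generated by exactly $\binom{n}{k}\prod_{j=k}^{n-1}(2^n-2^j)$ pairs $(S,M)$, and that this count does not depend on the tuple, so that it factors cleanly out of the sum. Working with the $GL$-form of $\add_k(f)$ rather than the average-over-the-equivalence-class form avoids having to track stabilizer sizes, which is why I would use that form throughout.
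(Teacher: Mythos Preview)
Your proposal is correct and follows essentially the same approach as the paper: both arguments use Lemma~\ref{lem:lin-transf-monomial} to identify the test outcome with a monomial coefficient, then count that each ordered linearly independent $k$-tuple arises from exactly $\binom{n}{k}\prod_{j=k}^{n-1}(2^n-2^j)$ pairs $(S,M)$, and finally invoke Remark~\ref{rem:lin-indep} to pass from the restricted failure probability to $\dt_k(f)$. Your presentation is slightly more explicitly staged, but the combinatorial core and the bookkeeping are the same as in the paper's proof.
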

\begin{proof}
 We aim to count all the monomials of degree $k$ in $f \circ \varphi_{M,v}$ for all $M\in GL(n, \f_2)$ and $v \in \f_2^n$,  let us call this number $N_1$. In other words, using (\ref{eq:add-def}), $N_1$ is such that
    \begin{equation}\label{eq:add-dt-proof1}
      \add_k(f) = \frac{N_1}{\binom{n}{k}2^n(2^n-1)(2^n-2)\cdots (2^n-2^{n-1})}.
    \end{equation}

   Then we compare this number with the number of  tuples of $k+1$ vectors $(u_0, u_1, u_2, \ldots, u_k) \in (\f_2^n)^{k+1}$ for which the test fails, let us call this number $N_2$. Using (\ref{eq:deg-test}) we have
     \begin{equation}\label{eq:add-dt-proof2}
    \dt_k(f) = \frac{N_2}{2^{(k+1)n}}
    \end{equation}
    Note that the fact that the test for $(u_0, u_1, u_2, \ldots, u_k)$ fails implies that $u_1, u_2, \ldots, u_k$ are linearly independent (see Remark~\ref{rem:lin-indep}).

   For each fixed $(u_0, u_1, u_2, \ldots, u_k) \in (\f_2^n)^{k+1}$, we know from Lemma~\ref{lem:lin-transf-monomial} that the test $\dbigoplus  _{c_1,\ldots,c_k \in \f_2} f\left(\left(\dbigoplus _{j=1}^{k} c_ju_{j}\right)\oplus u_0 \right)=0$ fails if and only if the monomial $x_{i_1}\cdots x_{i_k}$ appears in $f \circ \varphi_{M,u_0}$ for some invertible matrix $M$ which has the columns $ u_1, u_2, \ldots, u_k$ in some positions $1\le i_1 < \ldots < i_k\le n$.
   The number of invertible matrices $M$ that have the vectors $u_1, u_2, \ldots, u_k$ appearing (in this order) as some of their columns is
    \[\binom{n}{k} (2^n-2^k)(2^n-2^{k+1})\cdots (2^n - 2^{n-1})\]
    (there are $\binom{n}{k}$ ways of choosing the positions where these $k$ columns appear, and for each of them we can choose incrementally the remaining $n-k$ columns such that each newly chosen column is not in the vector space generated by the previously chosen columns, to ensure that the final matrix is invertible). Therefore each fixed $(u_0, u_1, u_2, \ldots, u_k)$ for which the test fails corresponds to $\binom{n}{k} (2^n-2^k)(2^n-2^{k+1})\cdots (2^n - 2^{n-1})$ monomials in polynomials of the form $f \circ \varphi_{M,u_0}$, and thus
     \[
  N_1 = N_2 \binom{n}{d}(2^n-2^k) (2^n-2^{k+1})\cdots (2^n - 2^{n-1}).
  \]
 Combining this relation with (\ref{eq:add-dt-proof1}) and (\ref{eq:add-dt-proof2}) yields the desired result.
Note that the probability of an arbitrary $k$-tuple of vectors  $u_1, u_2, \ldots, u_k$ to be linearly independent is
\[
\frac{(2^n-1)(2^n-2)\cdots (2^n-2^{k-1})}{2^{kn}} = \left(1-\frac{1}{2^n}\right)\cdots \left(1-\frac{1}{2^{n-k+1}}\right).
\]
\end{proof}

From Lemma~\ref{lem:lin-transf-monomial} and Theorem~\ref{thm:relation-between-the-two-notions} above and the fact that the degree is invariant to invertible affine transformations, we can deduce:
\begin{cor}\label{cor:inv}
 Let $f: \f_2^n \rightarrow \f_2$ be a function.\\
  (i) If $\deg(f)<k$ then $\dt_k(f) = \add_k(f) = 0$\\
  (ii) Both $\add_k(f)$ and $\dt_k(f)$ are invariant to affine equivalence, i.e. $f \sim g$ implies $\add_k(f)= \add_k(g)$ and $\dt_k(f)= \dt_k(g)$. Moreover, $f \sim_{k-1} g$ implies $\add_k(f)= \add_k(g)$ and $\dt_k(f)= \dt_k(g)$.
\end{cor}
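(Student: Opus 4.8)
The plan is to prove everything first for $\add_k$ and then transfer each statement to $\dt_k$ using Theorem~\ref{thm:relation-between-the-two-notions}, which says $\dt_k(f)=\add_k(f)\prod_{i=n-k+1}^{n}(1-2^{-i})$ with a proportionality factor depending only on $n$ and $k$; hence any identity of the form $\add_k(f)=\add_k(g)$ immediately yields $\dt_k(f)=\dt_k(g)$, and $\add_k(f)=0$ yields $\dt_k(f)=0$.

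For part (i), I would argue that $\deg(f)<k$ forces $\dd_k(g)=0$ for \emph{every} $g$ with $g\sim f$: since the degree is invariant under invertible affine transformations, $\deg(g)=\deg(f)<k$, so the ANF of $g$ contains no monomial of degree $k$. Averaging these zeros over the equivalence class gives $\add_k(f)=0$, and then $\dt_k(f)=0$ by the theorem. (Alternatively one can see $\dt_k(f)=0$ directly: by Lemma~\ref{lem:lin-transf-monomial} the test sum over a linearly independent $k$-tuple is the coefficient of a degree-$k$ monomial in an affine transform of $f$, hence $0$ when $\deg(f)<k$, while over linearly dependent tuples every function passes by Remark~\ref{rem:lin-indep}.)

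For the $\sim$ part of (ii), the key observation is simply that $\sim$ is an equivalence relation, so $f\sim g$ makes the sets $\{h:h\sim f\}$ and $\{h:h\sim g\}$ literally identical; since $\add_k$ is by Definition~\ref{def:add} the arithmetic mean of $\dd_k$ over exactly that set, $\add_k(f)=\add_k(g)$ is immediate, and $\dt_k(f)=\dt_k(g)$ follows from Theorem~\ref{thm:relation-between-the-two-notions}. (A direct route is also available: writing $g=f\circ\varphi_{N,w}$, one checks that the $\deg(g)<k$ test on $(u_0,u_1,\ldots,u_k)$ coincides with the $\deg(f)<k$ test on $(Nu_0\oplus w,\,Nu_1,\ldots,Nu_k)$, and the map $(u_0,\ldots,u_k)\mapsto(Nu_0\oplus w,Nu_1,\ldots,Nu_k)$ is a bijection of $(\f_2^n)^{k+1}$, so the counts of failing tuples agree.)

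For the $\sim_{k-1}$ refinement, suppose $f\sim_{k-1}g$, say $f\sim h$ and $g=h\oplus r$ with $\deg(r)\le k-1$. By the previous paragraph $\add_k(f)=\add_k(h)$, so it suffices to show $\add_k(h)=\add_k(g)$. For each invertible affine map $\varphi_{M,v}$ we have $g\circ\varphi_{M,v}=(h\circ\varphi_{M,v})\oplus(r\circ\varphi_{M,v})$, and substituting the affine expressions $Mx\oplus v$ into $r$ cannot raise its degree, so $\deg(r\circ\varphi_{M,v})\le k-1$; hence $h\circ\varphi_{M,v}$ and $g\circ\varphi_{M,v}$ have identical degree-$k$ parts, giving $\dd_k(h\circ\varphi_{M,v})=\dd_k(g\circ\varphi_{M,v})$. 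Averaging over all $\varphi_{M,v}$ yields $\add_k(h)=\add_k(g)$, and the theorem transfers this to $\dt_k$. The whole argument is bookkeeping; the only point needing a moment's care is the (standard) fact that composing a polynomial of degree $\le k-1$ with an affine change of variables does not increase its degree, which is precisely what makes the $\sim_{k-1}$ statement work, so I do not anticipate any real obstacle.
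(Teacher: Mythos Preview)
Your proposal is correct and follows essentially the same approach as the paper, which merely states that the corollary ``can be deduced'' from Lemma~\ref{lem:lin-transf-monomial}, Theorem~\ref{thm:relation-between-the-two-notions}, and the affine invariance of the degree; you have simply spelled out those deductions in full (working through $\add_k$ via Definition~\ref{def:add} and transferring to $\dt_k$ via the theorem), including the $\sim_{k-1}$ case that the paper leaves entirely implicit.
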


We prove  some useful properties of $\dt_k(f)$:
\begin{prop}\label{prop:basic-prop1}
Let $f, g_1:\f_2^n \rightarrow \f_2$ and $g_2:\f_2^m \rightarrow \f_2$  be  functions in $n$  variables.\\
(i) If $g(x_1, \ldots, x_n, x_{n+1}) = f(x_1, \ldots, x_n)$, then $\dt_k(g) = \dt_k(f)$.\\
(ii) If $g(x_1, \ldots, x_{n+m}) = g_1(x_1, \ldots, x_n) \oplus  g_2(x_{n+1}, \ldots, x_{n+m})$, then
$\dt_k(g) = \dt_k(g_1) + \dt_k(g_2) - 2 \dt_k(g_1)\dt_k(g_2)$.
\end{prop}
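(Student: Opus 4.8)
The plan is to prove both parts by directly manipulating the defining sums in Definition~\ref{def:dt}, reducing each to a statement about the uniform distribution on tuples of vectors.

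For part (i), I would introduce the coordinate projection $\pi\colon\f_2^{n+1}\to\f_2^n$ onto the first $n$ coordinates. Since $g(x)=f(\pi(x))$ and $\pi$ is $\f_2$-linear, for any $u_0,\dots,u_k\in\f_2^{n+1}$ the test sum $\bigoplus_{c_1,\dots,c_k}g\bigl((\bigoplus_i c_iu_i)\oplus u_0\bigr)$ equals $\bigoplus_{c_1,\dots,c_k}f\bigl((\bigoplus_i c_i\pi(u_i))\oplus\pi(u_0)\bigr)$; that is, $(u_0,\dots,u_k)$ fails the $\deg(g)<k$ test exactly when $(\pi(u_0),\dots,\pi(u_k))$ fails the $\deg(f)<k$ test. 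Applying $\pi$ componentwise gives a surjection $(\f_2^{n+1})^{k+1}\to(\f_2^n)^{k+1}$ with all fibres of size $2^{k+1}$, so the count of failing tuples for $g$ is $2^{k+1}$ times that for $f$; dividing by the denominators $2^{(k+1)(n+1)}$ and $2^{(k+1)n}$ respectively yields $\dt_k(g)=\dt_k(f)$. Equivalently, the componentwise projection is measure-preserving on uniform tuples and carries the failure event of $g$ onto that of $f$.

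For part (ii), I would split coordinates: write each $u_i\in\f_2^{n+m}$ as $(v_i,w_i)$ with $v_i\in\f_2^n$, $w_i\in\f_2^m$. Because $g=g_1\oplus g_2$ separates the two blocks and $\bigoplus$ distributes over $\oplus$ summand by summand, the test sum for $g$ at $(u_0,\dots,u_k)$ is the XOR of the test sum for $g_1$ at $(v_0,\dots,v_k)$ and the test sum for $g_2$ at $(w_0,\dots,w_k)$. Hence $g$ fails its test precisely when exactly one of the two block sub-tests fails. Now, if $(u_0,\dots,u_k)$ is uniform on $(\f_2^{n+m})^{k+1}$ then $(v_0,\dots,v_k)$ and $(w_0,\dots,w_k)$ are uniform and \emph{independent} on $(\f_2^n)^{k+1}$ and $(\f_2^m)^{k+1}$; so ``$g_1$-sub-test fails'' has probability $\dt_k(g_1)$, ``$g_2$-sub-test fails'' has probability $\dt_k(g_2)$, and the two are independent. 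The probability that exactly one occurs is then $\dt_k(g_1)(1-\dt_k(g_2))+(1-\dt_k(g_1))\dt_k(g_2)=\dt_k(g_1)+\dt_k(g_2)-2\dt_k(g_1)\dt_k(g_2)$.

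Neither part presents a genuine obstacle; the only points requiring a little care are the factorisation of the test sum of $g$ into the XOR of the two block sub-test sums (a routine use of distributivity, where one should note that the number of summands $2^k$ is immaterial) and the independence of the two block projections under the uniform distribution. I expect part (ii) to be the marginally more delicate write-up of the two, owing to the bookkeeping with the split coordinates.
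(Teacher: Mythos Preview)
Your proposal is correct and follows essentially the same approach as the paper's own proof: for (i), project onto the first $n$ coordinates, observe that the test outcome depends only on the projection, and use the fibre count $2^{k+1}$; for (ii), split into blocks, note that the test sum for $g$ is the XOR of the two block sub-test sums so that $g$ fails iff exactly one sub-test fails, and invoke independence of the two blocks under the uniform distribution. Your write-up is in fact slightly more explicit than the paper's (which simply asserts the independence in (ii) without the bookkeeping), but the argument is the same.
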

\begin{proof}

 (i) Consider $v_0, v_1,\ldots, v_k \in (\f_2^{n+1})^{k+1}$. Whether the $\deg(g)<k$ test passes or fails at $v_0, v_1,\ldots, v_k$ only depends on the projection of the vectors on the first $n$ coordinates, as $f$ only depends on those coordinates. Let us denote by $u_0, u_1,\ldots, u_k$, respectively,  the projections of $v_0, v_1,\ldots, v_k$ on the first $n$ coordinates. For each $u_0, u_1,\ldots, u_k$, there are $2^{k+1}$ possible values for $v_0, v_1,\ldots, v_k$ which have the same projections $u_0, u_1,\ldots, u_k$. So the number of $v_0, v_1,\ldots, v_k$ vectors on which the $\deg(g)<k$ test fails, let us call it $N_1$ equals $2^{k+1}N_2$, where we denoted by $N_2$ the number of $u_0, u_1,\ldots, u_k$ vectors on which the $\deg(f)<k$ test fails. Therefore:
 \begin{eqnarray*}
   \dt_k(g) &=& \frac{N_1}{2^{(k+1)(n+1)}} \\
    &=& \frac{2^{k+1}N_2}{2^{(k+1)(n+1)}}  \\
    &=& \frac{N_2}{2^{(k+1)n}} \\
    &=& \dt_k(f).
 \end{eqnarray*}

(ii) Since the sets of variables of the two functions $g_1$ and $g_2$ are disjoint, their values can be viewed as independent events. The function $g$ fails the test if and only if exactly one of the functions $g_1$ or $g_2$ fails the test, so $\dt_k(g) = \dt_k(g_1)(1-  \dt_k(g_2)) +  \dt_k(g_2)(1-\dt_k(g_1))$.

\end{proof}

\section{Probability of failing the $\deg(f)<k$ test when $f$ has degree $k$}

The $\deg(f)<k$ test would be used as follows: we run the $\deg(f)<k$ test a number of times, say $t$ times, for different choices of vectors $u_0,\ldots, u_k\in (\f_2^n)^{k+1}$ (chosen independently, with uniform distribution). If $f$ passes all the $t$ tests, we conclude that we probably have $\deg(f)<k$; if $f$ fails at least one of the test we conclude that $\deg(f)\ge k$.

When $\deg(f)$ is truly below $k$, the function $f$ will always pass the $\deg(f)<k$ test as, by Corollary~\ref{cor:inv}, we have $\dt_k(f)=0$. In other words, there are no false negatives. However, when the true  degree of $f$ is at least $k$, it is possible to wrongly conclude that $\deg(f)<k$ (false positive). The probability of that happening is $(1-\dt_k(f))^t$. It is therefore important to determine $\dt_k(f)$ for polynomials of degree $k$ or more.
 In this paper we commence the study of $\dt_k(f)$ by proving results for the case when $\deg(f)= k$; the case $\deg(f)> k$ will be the subject of further work.

Throughout this section we assume that $\deg(f)=k$.
Note that for affine invertible transformations $\varphi_{M,u_0}(x) = Mx+u_0$,  the monomials of maximum degree in $f \circ \varphi_{M,u_0}$, are the same regardless of the value of $u_0$, and therefore the same as the ones in $f\circ \varphi_{M}$, where $\varphi_{M}(x) = Mx$ is a linear invertible transformation. Therefore, when we study only the monomials of maximum degree, it is sufficient to look at linear, rather than affine transformations, so when $\deg(f)=k$ the equation (\ref{eq:add-def}) becomes:
 \[
  \add_k(f) = \frac{\dsum_{M \in GL(n, \f_2)} \dd(f\circ \varphi_{M})}{(2^n-1)(2^n-2)\cdots (2^n-2^{n-1})}.
  \]
Similarly, if $\deg(f) = k$ we have
\begin{equation}\label{eq:deg-test-deg-f}
\dt_k(f) = \frac{|\{ (u_1, u_2, \ldots, u_k) \in (\f_2^n)^k : \dbigoplus  _{c_1,\ldots,c_k \in \f_2} f\left(\dbigoplus _{i=1}^{k} c_iu_i\right) \neq 0\}|}{2^{kn}}.
\end{equation}

Recall from Corollary~\ref{cor:inv} that
$\add_k(f)$ and $\dt_k(f)$ are invariant under the equivalence $\sim_{k-1}$ i.e. $f\sim_{k-1} g$ implies $\add_k(f)= \add_k(g)$ and $\dt_k(f)= \dt_k(g)$. When considering polynomials of degree $k$ under $\sim_{k-1}$ equivalence, is suffices to consider representatives which only contain monomials of degree $k$, i.e. are homogeneous. In this context, the following construction was used extensively in the classification in \cite{Hou96} and \cite{LanLea07}.
Assume $f$ is homogeneous of degree $k$ and write its algebraic normal form as $f(x_1, \ldots, x_n) = \sum_{t} b_t t$, with $t$ ranging over all monomials of degree $k$ in the variables $x_1, \ldots, x_n$. Define $f^c(x_1, \ldots, x_n)  = \sum_{t} b_t \frac{x_1\cdots x_n}{t}$ (this is sometimes called the complement of $f$, but it should not be confused with the Boolean complement, which is $f\oplus 1$).

We prove  sone additional useful properties of $\dt_k(f)$:
\begin{prop}\label{prop:basic-prop2}
Let $f:\f_2^n \rightarrow \f_2$   be a  function of degree $k\ge 1$ in $n$  variables.\\
(i) If $g(x_1, \ldots, x_n, x_{n+1}) = x_{n+1} f(x_1, \ldots, x_n)$, then $\dt_{k+1}(g) = \left(1 - \frac{1}{2^{k+1}} \right) \dt_k(f)$.\\
(ii) If $f$ is homogeneous of degree $k\le \frac{n}{2}$, then $\add_{n-k}(f^c) = \add_k(f)$ and
$\dt_{n-k}(f^c) = \dt_k(f) \dprod_{i=k+1}^{n-k}\left( 1 - \frac{1}{2^{i}}\right)$.
\end{prop}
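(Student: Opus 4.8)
I would prove the two parts separately: part (i) by a direct evaluation of the test, part (ii) from a duality lemma about the complement operation combined with Theorem~\ref{thm:relation-between-the-two-notions}.

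\emph{Part (i).} Evaluate the $\deg(g)<k+1$ test and condition on the last coordinates. Pick $v_1,\dots ,v_{k+1}\in\f_2^{n+1}$, write $v_i=(u_i,\epsilon_i)$ with $u_i\in\f_2^n$, $\epsilon_i\in\f_2$, set $\epsilon=(\epsilon_1,\dots ,\epsilon_{k+1})$, and write $\epsilon\cdot c=\bigoplus_{i=1}^{k+1}\epsilon_ic_i$ for $c\in\f_2^{k+1}$. Since $g(x_1,\dots ,x_{n+1})=x_{n+1}f(x_1,\dots ,x_n)$,
\begin{eqnarray*}
\dbigoplus_{c\in\f_2^{k+1}} g\!\left(\dbigoplus_{i=1}^{k+1}c_iv_i\right)
&=&\dbigoplus_{c\in\f_2^{k+1}}\left(\dbigoplus_{i=1}^{k+1}c_i\epsilon_i\right)f\!\left(\dbigoplus_{i=1}^{k+1}c_iu_i\right)\\
&=&\dbigoplus_{c\in\f_2^{k+1},\ \epsilon\cdot c=1} f\!\left(\dbigoplus_{i=1}^{k+1}c_iu_i\right).
\end{eqnarray*}
If $\epsilon=\00$ (a fraction $2^{-(k+1)}$ of the choices) the sum is empty and $g$ passes. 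If $\epsilon\neq\00$, the set $\{c:\epsilon\cdot c=1\}$ is a coset $c^{*}\oplus D$ of a $k$-dimensional subspace $D\subseteq\f_2^{k+1}$; taking a basis $d^{(1)},\dots ,d^{(k)}$ of $D$ turns the sum into $\bigoplus_{b\in\f_2^k}f\!\big(w_0\oplus\bigoplus_l b_lw_l\big)$, where the matrix $[w_0|w_1|\cdots|w_k]$ equals $[u_1|\cdots|u_{k+1}]\,[c^{*}|d^{(1)}|\cdots|d^{(k)}]$, i.e.\ is obtained from $(u_1,\dots ,u_{k+1})$ by right-multiplication with an invertible $(k+1)\times(k+1)$ matrix; hence $(w_0,\dots ,w_k)$ is uniform on $(\f_2^n)^{k+1}$ when the $v_i$ are. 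By Definition~\ref{def:dt} this is nonzero with conditional probability $\dt_k(f)$, the same for every nonzero $\epsilon$, so $\dt_{k+1}(g)=\big(1-2^{-(k+1)}\big)\dt_k(f)$.

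\emph{Part (ii).} The crux is the duality lemma: \emph{if $f$ is homogeneous of degree $k$ and $W\subseteq\f_2^n$ is an $(n-k)$-dimensional subspace, then $\bigoplus_{w\in W}f^c(w)=\bigoplus_{v\in W^{\perp}}f(v)$}, where $W^{\perp}$ (of dimension $k$) is the orthogonal complement for the standard scalar product. I would prove it by linearity in $f$, reducing to $f=\prod_{i\in I}x_i$ (so $f^c=\prod_{l\in\bar I}x_l$) with $|I|=k$, and showing both sides equal the determinant over $\f_2$ of the same matrix $A$ — the $(n-k)\times(n-k)$ matrix whose $(l,j)$-entry is the $l$-th coordinate of $w_j$, for $l\in\bar I$ and $1\le j\le n-k$, where $w_1,\dots ,w_{n-k}$ is a fixed basis of $W$. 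For the left side, reparametrising $W$ by the basis gives $\bigoplus_{w\in W}f^c(w)=\bigoplus_{c\in\f_2^{n-k}}\prod_{l\in\bar I}\big(\bigoplus_j (w_j)_l\,c_j\big)$, which by the Moebius transform~\eqref{eq:Moebius1} (all $a_i=1$) is the coefficient of $c_1\cdots c_{n-k}$ in that polynomial, i.e.\ the permanent of $A$, which over $\f_2$ equals $\det A$. For the right side, $v\in W^{\perp}$ iff $v\cdot w_j=0$ for all $j$, so the indicator of $W^{\perp}$ is $\prod_{j=1}^{n-k}\!\big(1\oplus (x\cdot w_j)\big)$; hence $\bigoplus_{v\in W^{\perp}}f(v)=\bigoplus_{v\in\f_2^n}\big(f\cdot\prod_j(1\oplus (x\cdot w_j))\big)(v)$ is, again by~\eqref{eq:Moebius1}, the coefficient of $x_1\cdots x_n$ in $f\cdot\prod_j(1\oplus(x\cdot w_j))$, and since $f=\prod_{i\in I}x_i$ has degree $k$ this equals the coefficient of $\prod_{l\in\bar I}x_l$ in $\prod_j(1\oplus(x\cdot w_j))$ — again the permanent, hence $\det A$.

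With the lemma, part (ii) follows from the counting already done for Theorem~\ref{thm:relation-between-the-two-notions}. As $\deg(f)=k$ and $\deg(f^c)=n-k$, formula~\eqref{eq:deg-test-deg-f} expresses $\dt_k(f)$ (resp.\ $\dt_{n-k}(f^c)$) as the probability that $\bigoplus_{w\in W}f(w)\neq0$ (resp.\ $\bigoplus_{w\in W}f^c(w)\neq0$) for $W$ spanned by $k$ (resp.\ $n-k$) uniformly random vectors of $\f_2^n$; a nonzero value forces those vectors to be linearly independent, which has probability $\prod_{i=n-k+1}^n(1-2^{-i})$, resp.\ $\prod_{i=k+1}^n(1-2^{-i})$, and conditionally $W$ is then a uniformly random subspace of the relevant dimension. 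By the lemma and the bijection $W\leftrightarrow W^{\perp}$ between $(n-k)$- and $k$-dimensional subspaces, the two conditional probabilities coincide, so
\[
\frac{\dt_{n-k}(f^c)}{\prod_{i=k+1}^n(1-2^{-i})}=\frac{\dt_k(f)}{\prod_{i=n-k+1}^n(1-2^{-i})}.
\]
Since $k\le n/2$, the quotient of the two products is $\prod_{i=k+1}^{n-k}(1-2^{-i})$, which yields the claimed value of $\dt_{n-k}(f^c)$; and $\add_{n-k}(f^c)=\add_k(f)$ then follows by applying Theorem~\ref{thm:relation-between-the-two-notions} to $f$ and to $f^c$, using $\binom{n}{n-k}=\binom{n}{k}$.

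The main obstacle is the duality lemma; the rest is bookkeeping. The lemma rests on three elementary but easily miscounted facts — that the sum of a Boolean polynomial over all of $\f_2^n$ is its coefficient on $x_1\cdots x_n$ (the Moebius transform); that the indicator function of an orthogonal complement is literally a product of affine-linear forms; and that the permanent of a $0/1$ matrix equals its determinant mod $2$ — so the only real care is tracking which variables and which monomials index each expression, to see that both sides are genuinely the same determinant $\det A$.
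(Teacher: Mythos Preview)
Your argument is correct, and both parts take a genuinely different route from the paper's own proof.

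\textbf{Part (i).} The paper sets up a bijection between the set $L_{k+1}(g)$ of $(k{+}1)$-dimensional \emph{linear} subspaces of $\f_2^{n+1}$ on which $g$ fails and the set $A_k(f)$ of $k$-dimensional \emph{affine} subspaces of $\f_2^{n}$ on which $f$ fails, constructed via the projections onto the first $n$ and the last coordinate; then it compares the number of ordered tuples giving each type of space. Your approach instead conditions directly on the last-coordinate vector $\epsilon$ and observes that, for each nonzero $\epsilon$, the resulting sum is a $\deg(f)<k$ test at $(w_0,\ldots,w_k)$ obtained from $(u_1,\ldots,u_{k+1})$ by right-multiplication with an invertible $(k{+}1)\times(k{+}1)$ matrix depending only on $\epsilon$. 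This is slicker: you avoid the bookkeeping of counting how many ordered tuples produce a given subspace, at the (minor) cost of implicitly invoking \eqref{eq:deg-test-deg-f} to drop the offset $v_0$ for $g$ (since $\deg(g)=k+1$). Both proofs are short; yours makes the $(1-2^{-(k+1)})$ factor appear more transparently.

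\textbf{Part (ii).} The paper simply cites Hou's result that $h\mapsto h^c$ carries the $\sim_{k-1}$-orbit of $f$ bijectively onto the $\sim_{n-k-1}$-orbit of $f^c$, with the same monomial counts on both sides, from which $\add_{n-k}(f^c)=\add_k(f)$ is immediate and the $\dt$ relation follows from Theorem~\ref{thm:relation-between-the-two-notions}. Your duality lemma $\bigoplus_{w\in W}f^c(w)=\bigoplus_{v\in W^{\perp}}f(v)$ is a self-contained replacement for that citation: it directly matches the pass/fail of the test for $f^c$ on $W$ to the pass/fail for $f$ on $W^{\perp}$, and then the bijection $W\leftrightarrow W^{\perp}$ between subspace sets of dimensions $n-k$ and $k$ gives $\add_{n-k}(f^c)=\add_k(f)$. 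The determinant/permanent identification is a nice way to see both sums equal $\det A$; the only delicate point---which you handle correctly---is that in the ANF ring the coefficient of $x_1\cdots x_n$ in $\prod_{i\in I}x_i\cdot\prod_j(1\oplus x\cdot w_j)$ really is the coefficient of $\prod_{l\in\bar I}x_l$ in $\prod_j(1\oplus x\cdot w_j)$, because $\deg\big(\prod_j(1\oplus x\cdot w_j)\big)=n-k$ forces the contributing monomial to be squarefree with support exactly $\bar I$. Your route is more elementary and keeps the paper self-contained; the paper's route is a one-line appeal to the literature.
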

\begin{proof}

(i)   Consider $k+1$ vectors $u_0, u_1, \ldots, u_k \in (\f_2^n)^{k+1} $ such that $u_1, u_2, \ldots, u_k$ are linearly independent. Running the $\deg(f)<k$ test on these vectors involves adding the values of $f$ on all vectors in the affine space $U = u_0 \oplus \langle u_1, u_2, \ldots, u_k \rangle$. The result will be the same for all the other $(k+1)$-tuples $u'_0, u'_1, \ldots, u'_k \in (\f_2^n)^{k+1} $ for which $u'_0 \oplus \langle u'_1, u'_2, \ldots, u'_k \rangle = u_0 \oplus \langle u_1, u_2, \ldots, u_k \rangle$.
 There are $(2^k-1)(2^k-2)\cdots(2^k-2^{k-1})$ ways to choose a $k$-tuple $u'_1, \ldots, u'_k \in (\f_2^n)^{k+1} $ such that $ \langle u'_1, u'_2, \ldots, u'_k \rangle = \langle u_1, u_2, \ldots, u_k \rangle$ and $2^k(2^k-1)(2^k-2)\cdots(2^k-2^{k-1})$ ways to choose a $(k+1)$-tuple $u'_0, u'_1, \ldots, u'_k \in (\f_2^n)^{k+1} $ such that $ u'_0 \oplus\langle u'_1, u'_2, \ldots, u'_k \rangle = u_0 \oplus\langle u_1, u_2, \ldots, u_k \rangle$.

Let us denote by $L_{k+1}(g)$ the set of vector spaces of dimension $k+1$ on which the  $\deg(g)< k+1$  test fails, and
denote by $A_{k}(f)$ the set of \emph{affine}  spaces of dimension $k$ on which the  $\deg(f)< k$ test fails. We have:
\[
\dt_k(f) = \frac{2^k(2^k-1)(2^k-2)\cdots(2^k-2^{k-1})|A_k(f)|}{2^{(k+1)n}}
\]
and
\[
\dt_{k+1}(g) = \frac{(2^{k+1}-1)(2^{k+1}-2)\cdots(2^{k+1}-2^{k})|L_{k+1}(g)|}{2^{(k+1)(n+1)}}.
\]
Therefore
\[
\dt_{k+1}(g) = \dt_k(f) \left(1-\frac{1}{2^{k+1}}  \right) \frac{|L_{k+1}(g)|}{|A_k(f)|}.
\]
All we have to do now is to show that the sets $L_{k+1}(g)$ and $A_k(f)$ have the same cardinality.

Consider two projection homomorphisms: $\pi:\f_2^{n+1} \rightarrow \f_2^{n}$ defined as the projection on the first $n$ components $\pi(x_1, \ldots, x_n, x_{n+1}) = (x_1, \ldots, x_n)$ and $\psi: \f_2^{n+1} \rightarrow \f_2$ defined as the projection on the last component $\psi(x_1, \ldots, x_n, x_{n+1}) = x_{n+1}$.

Let $V\in L_{k+1}(g)$. Consider the space $V_0 =\{ v\in V: \psi(v)=0\}$. Note that $g(v)=0$ for all $v\in V_0$, as $g(x_1, \ldots, x_n, x_{n+1}) = 0$ whenever $x_{n+1}=0$. Since the test fails on $V$, $g$ must be non-zero on at least some elements of $V$. Therefore $V_0$ must be a proper subspace of $V$, and has dimension $\dim(V)-1=k$. Moreover, we can uniquely partition $V$ into the linear space $V_0$ and the affine space $V\setminus V_0 = w\oplus V_0$ where $w$ is any vector $w\in V\setminus V_0$. We have
\begin{equation}\label{eq:projection}
  \bigoplus_{v\in V} g(v) = \bigoplus_{v\in V_0} g(v)\oplus \bigoplus_{v\in w\oplus V_0} g(v) = \bigoplus_{u\in \pi(w\oplus V_0)} f(u) = \bigoplus_{u\in \pi(V\setminus V_0)} f(u).
\end{equation}
We have proved therefore that if $g$ fails on $V$ then $f$ fails on $U = \pi(V\setminus V_0)$. Conversely, for any affine space $U\in A_k(f)$, we construct the space $V\in  L_{k+1}(g)$ as follows. For any affine space $U$ of dimension $k$ there is a unique linear space $U_0$ of dimension $k$ such that $U$ can be written as $U =  u_0 \oplus U_0$ for some $u_0\in U$ (in fact any $u_0\in U$ will work). We then define $V_0$ as the $k$-dimensional linear space $\{v\in \f_2^{n+1}: \pi(v)\in U_0, \psi(v)=0\}$. We pick one vector $u_0 \in U$ and define $w$ as the vector satisfying $\pi(w)=u_0, \psi(w) =1$. Finally we define $V= V_0 \cup  (w \oplus V_0)$, which is a linear space of dimension $k+1$. Note that $V$ is the same regardless on which $u_0\in U_0$ was chosen. As in (\ref{eq:projection}) above, one can see that if $f$ fails the test on $U$ then $g$ fails the test on $V$.

(ii)
In~\cite[page 110]{Hou96}  it was proven that
the orbit of $f^c$ under the equivalence $\sim_{n-k-1}$ has the same cardinality as the orbit of $f$ under $\sim_{k-1}$, and moreover, the orbit of $f^c$ is $\{h^c: h \sim_{k-1} f\}$. Since $h$ and $h^c$ have the same number of monomials, from Definition~\ref{def:add} we have that $\add_k(f) = \add_{n-k}(f^c)$. We then apply Theorem~\ref{thm:relation-between-the-two-notions} to obtain the required result.

\end{proof}

Propositions~\ref{prop:basic-prop1} and~\ref{prop:basic-prop2} above allow us to compute the values of $\dt_k(f)$ for some simple functions $f$:
\begin{cor}\label{cor:dt-simple-functions} We have

$\dt_k(x_1\cdots x_k) = \dprod_{i=1}^k \left(1-\frac{1}{2^i}  \right)$.

$\dt_k(x_1\cdots x_k \oplus  x_{k+1}\cdots x_{2k}) = 2 \dprod_{i=1}^k \left( 1-\frac{1}{2^i}  \right) \left( 1 - \dprod_{i=1}^k \left( 1-\frac{1}{2^i}  \right)\right)$.


$\dt_{k+t}(x_{2k+1}\cdots x_{2k+t}(x_1\cdots x_k \oplus  x_{k+1}\cdots x_{2k} ) = 2 \dprod_{i=1}^{k+t} \left(1-\frac{1}{2^i}  \right) \left( 1 - \dprod_{i=1}^k \left(1-\frac{1}{2^i}  \right)\right) $.
\end{cor}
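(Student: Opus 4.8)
The three formulas in Corollary~\ref{cor:dt-simple-functions} follow by assembling the building blocks established in Propositions~\ref{prop:basic-prop1} and~\ref{prop:basic-prop2} around a single base case, namely the value of $\dt_k$ on a single monomial of degree $k$. So the plan is: first establish the base case $\dt_k(x_1\cdots x_k) = \prod_{i=1}^k(1-2^{-i})$; then derive the second formula by applying Proposition~\ref{prop:basic-prop1}(ii) with $g_1 = x_1\cdots x_k$ and $g_2 = x_{k+1}\cdots x_{2k}$; then derive the third formula by applying Proposition~\ref{prop:basic-prop2}(i) repeatedly (once for each of the $t$ extra variables $x_{2k+1}, \ldots, x_{2k+t}$) to the function from the second formula.

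For the base case, I would note that $f = x_1\cdots x_k$ is homogeneous of degree $k$, so by \eqref{eq:deg-test-deg-f} we need to count the $k$-tuples $(u_1,\ldots,u_k)\in(\f_2^n)^k$ on which $\bigoplus_{c_1,\ldots,c_k} f(\bigoplus_i c_i u_i) \ne 0$. By Lemma~\ref{lem:lin-transf-monomial} (with $u_0 = 0$), this sum equals the coefficient of $x_1\cdots x_k$ in $f\circ\varphi_M$ for any $M$ having $u_1,\ldots,u_k$ as its first $k$ columns; equivalently, writing $u_i = \bigoplus_j m_{ij}\mathbf{e}_j$, the sum is the coefficient of $x_1\cdots x_k$ in $\prod_{i=1}^k(\bigoplus_j m_{ij} x_j)$, which is the permanent (over $\f_2$, hence the determinant) of the $k\times n$ matrix restricted appropriately — more cleanly, it is nonzero exactly when the rank-$k$ condition on the relevant $k\times k$ minors makes the product contain $x_1\cdots x_k$. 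The cleanest route is: the test fails iff there is \emph{some} choice of $k$ column-positions $i_1<\cdots<i_k$ such that the $k\times k$ submatrix of the $n\times k$ matrix $[u_1|\cdots|u_k]$ on those rows is invertible, which happens iff $u_1,\ldots,u_k$ are linearly independent; and then among the linearly independent $k$-tuples exactly... — this needs a short argument that when $u_1,\ldots,u_k$ are independent the coefficient is nonzero for an odd number of position-choices. An alternative, and I think the intended, derivation avoids this: use Theorem~\ref{thm:relation-between-the-two-notions}, since $\add_k(x_1\cdots x_k) = 1/\binom{n}{k}$ trivially fails for $n=k$ but for $n=k$ we get $\add_k = 1$, so $\dt_k(x_1\cdots x_k) = \prod_{i=1}^k(1-2^{-i})$ directly from the theorem with $n=k$; Proposition~\ref{prop:basic-prop1}(i) then shows this value is independent of $n$, confirming the base case in all dimensions.

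For the second formula, apply Proposition~\ref{prop:basic-prop1}(ii): with $d_1 := \dt_k(x_1\cdots x_k) = \prod_{i=1}^k(1-2^{-i})$ and $d_2 := \dt_k(x_{k+1}\cdots x_{2k})$ equal to the same value by the base case, we get $\dt_k(g) = d_1 + d_2 - 2d_1 d_2 = 2d_1(1-d_1)$, which is exactly the stated expression. For the third formula, write $h := x_1\cdots x_k \oplus x_{k+1}\cdots x_{2k}$, which has degree $k$; then successively multiplying by $x_{2k+1}, x_{2k+2}, \ldots, x_{2k+t}$ and invoking Proposition~\ref{prop:basic-prop2}(i) once at each step gives
\[
\dt_{k+t}\bigl(x_{2k+1}\cdots x_{2k+t}\cdot h\bigr) = \dt_k(h)\prod_{i=k+1}^{k+t}\Bigl(1-\frac{1}{2^i}\Bigr) = 2\prod_{i=1}^{k}\Bigl(1-\frac{1}{2^i}\Bigr)\Bigl(1-\prod_{i=1}^k\Bigl(1-\frac{1}{2^i}\Bigr)\Bigr)\prod_{i=k+1}^{k+t}\Bigl(1-\frac{1}{2^i}\Bigr),
\]
and combining the two products $\prod_{i=1}^k$ and $\prod_{i=k+1}^{k+t}$ into $\prod_{i=1}^{k+t}$ yields the claimed formula.

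**Main obstacle.** The only genuinely delicate point is the base case $\dt_k(x_1\cdots x_k)$; everything else is bookkeeping with the two propositions. I expect the slickest justification is the one via Theorem~\ref{thm:relation-between-the-two-notions} in $n = k$ variables (where $\add_k = 1$ is immediate) followed by Proposition~\ref{prop:basic-prop1}(i) to lift to arbitrary $n$, rather than a direct combinatorial count of linearly-independent tuples on which a specific coefficient is odd. A sanity check worth recording: for $k=1$ the third formula degenerates correctly, and for $t=0$ it must reduce to the second formula, which it visibly does since the empty product $\prod_{i=k+1}^{k}$ equals $1$.
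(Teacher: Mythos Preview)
Your proposal is correct. The paper itself gives no detailed proof of this corollary beyond the sentence ``Propositions~\ref{prop:basic-prop1} and~\ref{prop:basic-prop2} above allow us to compute the values of $\dt_k(f)$ for some simple functions $f$,'' and your derivation of the second and third formulas (via Proposition~\ref{prop:basic-prop1}(ii) and iterated Proposition~\ref{prop:basic-prop2}(i), respectively) is exactly the intended route.

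The one place where your argument deviates slightly from what the paper's wording suggests is the base case $\dt_k(x_1\cdots x_k)$. You obtain it by applying Theorem~\ref{thm:relation-between-the-two-notions} in the ambient space $n=k$ (where trivially $\add_k(x_1\cdots x_k)=1$, so $\dt_k = \prod_{i=1}^k(1-2^{-i})$) and then invoking Proposition~\ref{prop:basic-prop1}(i) to make the value independent of $n$. The paper's phrasing points instead to deriving the base case purely from the two propositions: compute directly that $\dt_1(x_1)=\tfrac12$ from the definition, and then apply Proposition~\ref{prop:basic-prop2}(i) inductively, multiplying by the factor $(1-2^{-(j+1)})$ at each step to get $\dt_k(x_1\cdots x_k)=\prod_{i=1}^k(1-2^{-i})$. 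Both routes are equally short and valid; yours has the minor cost of invoking an extra theorem, while the iterated-derivative route stays within the two propositions the paper cites. Your exploratory aside about minors and determinants is unnecessary and can be dropped.
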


\begin{ex}
Using  the Corollary~\ref{cor:dt-simple-functions} above we compute the exact values of $\dt_k(f)$ for some functions $f$. For example $\dt_3(x_1x_2x_3) = 21/64 = 0.328125$, $\dt_2(x_1x_2 + x_3x_4) = 15/32 = 0.46875$, $\dt_4((x_1x_2 + x_3x_4)x_5x_6) = 0.384521484$. Finally, revisiting Example~\ref{ex:old-test}, for $f =x_1x_2x_3 +   x_4x_5x_6 +x_7x_8x_9$ we compute $\dt_3(f) = \frac{31437}{2^{16}} \approx 0.47969$. This means that after running 9 times the $\deg(f)<3$ test on this $f$ we have only a $(1-0.47969)^9 = 0.0028$
 probability of incorrectly deciding that $\deg(f)<3$; compare that with a probability of $0.72$ for the original test, as explained in Example~\ref{ex:old-test}.
\end{ex}

We will now prove lower and upper bounds for $\dt(f)$:

\begin{thm}\label{thm:bounds}
Let $f:\f_2^n \rightarrow \f_2$ be a function of degree $k\ge 1$. Then
\[
\dprod_{i=1}^k \left(1-\frac{1}{2^i}  \right) \le \dt_k(f) \le \frac{1}{2}\left(1-\frac{1}{2^n} \right)^{k-1}.
\]
The lower bound is achieved if and only if $f \sim_{k-1} x_1\cdots x_k $.
\end{thm}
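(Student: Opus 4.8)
The plan is to induct on $k$, reducing the $\deg(f)<k$ test to the $\deg(\cdot)<k-1$ test via directional derivatives. Write $D_wf(x)=f(x)\oplus f(x\oplus w)$ and let $f_k$ be the homogeneous degree-$k$ part of $f$. For the base case $k=1$, a degree-one $f$ equals $a_0\oplus\ell$ with $\ell$ a nonzero linear form, so the test at $u_1$ fails exactly when $\ell(u_1)=1$ and $\dt_1(f)=\tfrac12$, which is simultaneously the lower and the upper bound; moreover every degree-one function is $\sim_0 x_1$, so the equality clause holds too. For $k\ge 2$, summing \eqref{eq:deg-test-deg-f} over $c_k$ first rewrites $\bigoplus_{c_1,\dots,c_k}f(\bigoplus_i c_iu_i)$ as $\bigoplus_{c_1,\dots,c_{k-1}}(D_{u_k}f)(\bigoplus_{i<k}c_iu_i)$; since $\deg D_wf\le k-1$ for every $w$, Remark~\ref{rem:lin-indep}, Lemma~\ref{lem:lin-transf-monomial} together with \eqref{eq:deg-test-deg-f} (when $\deg D_wf=k-1$) and Corollary~\ref{cor:inv}(i) (when $\deg D_wf\le k-2$) show that for fixed $w$ the fraction of tuples $(u_1,\dots,u_{k-1})$ on which the inner sum is nonzero equals $\dt_{k-1}(D_wf)$. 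Averaging over $w=u_k$ gives
\[
\dt_k(f)=\frac{1}{2^n}\sum_{w\in\f_2^n}\dt_{k-1}(D_wf).
\]

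Since the degree-$<k$ part of $f$ contributes a derivative of degree at most $k-2$, we have $\deg D_wf=k-1$ exactly when the degree-$(k-1)$ homogeneous component of $D_wf_k$ is nonzero, so the nonzero summands above are precisely those with $w\notin W$, where $W$ is the set of $w$ for which that component vanishes. The map sending $w$ to this component is $\f_2$-linear, with value $\bigoplus_i w_i\,\partial f_k/\partial x_i$; hence $W$ is a subspace and $\dim W=n-\dim\mathrm{span}\{\partial f_k/\partial x_1,\dots,\partial f_k/\partial x_n\}$.

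The crux is the bound $\dim\mathrm{span}\{\partial f_k/\partial x_i\}\ge k$, equivalently $|W|\le 2^{n-k}$, which I would prove by a second induction, on the number of variables, via the splitting $f_k=x_n g\oplus h$ with $g=\partial f_k/\partial x_n$ (degree $k-1$) and $h=f_k|_{x_n=0}$ (degree $k$): if $h\neq 0$ one restricts to $x_n=0$ and applies the hypothesis to $h$, while if $h=0$ one uses that $g\neq 0$ has no monomial divisible by $x_n$, whereas every $x_n\,\partial g/\partial x_i$ does, so $g$ is linearly independent of the $x_n\,\partial g/\partial x_i$. Granting this, for $w\notin W$ the function $D_wf$ has degree exactly $k-1\ge 1$, so the induction hypothesis gives $\prod_{i=1}^{k-1}(1-2^{-i})\le\dt_{k-1}(D_wf)\le\tfrac12(1-2^{-n})^{k-2}$; combining this with $1\le|W|\le 2^{n-k}$, that is $1-2^{-k}\le|\f_2^n\setminus W|/2^n\le 1-2^{-n}$, and feeding it into the displayed identity yields $\prod_{i=1}^k(1-2^{-i})\le\dt_k(f)\le\tfrac12(1-2^{-n})^{k-1}$.

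For the equality clause, the ``if'' direction is immediate since $f\sim_{k-1}x_1\cdots x_k$ gives $\dt_k(f)=\dt_k(x_1\cdots x_k)=\prod_{i=1}^k(1-2^{-i})$ by Corollary~\ref{cor:inv}(ii) and Corollary~\ref{cor:dt-simple-functions}. For ``only if'', equality forces $|W|=2^{n-k}$, and one then shows --- as the sharp case of the $W$-bound, carried through the same induction on the number of variables --- that $\dim W=n-k$ holds exactly when $f_k$ is, modulo monomials of degree $<k$, linearly equivalent to $x_1\cdots x_k$, i.e.\ some affine copy of $f$ has degree-$k$ part $x_1\cdots x_k$; since $\sim_{k-1}$ is an equivalence relation containing $\sim$, this says exactly $f\sim_{k-1}x_1\cdots x_k$. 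The step I expect to be the main obstacle is this lemma on $W$: proving the sharp inequality $\dim W\le n-k$ and, especially, its equality characterization is where one must use the combinatorial structure of homogeneous degree-$k$ polynomials over $\f_2$ rather than merely the bound $\deg D_wf\le k-1$.
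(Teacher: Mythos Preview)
Your proposal is correct and follows essentially the same strategy as the paper: induct on $k$, rewrite the test via the derivative $D_{u_k}f$ to obtain $\dt_k(f)=2^{-n}\sum_{w}\dt_{k-1}(D_wf)$, and then control the number of directions $w$ in which the derivative drops below degree $k-1$. The one substantive difference is that the paper does not prove the bound $|W|\le 2^{n-k}$ or its equality case from scratch; instead it imports these from the literature on \emph{fast points} (a fast point being precisely a nonzero element of your $W$): the cited result of S\u{a}l\u{a}gean and Mandache-S\u{a}l\u{a}gean states that a degree-$k$ function in $n$ variables has at most $2^{n-k}-1$ fast points, with equality iff $f\sim_{k-1}x_1\cdots x_k$. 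Your plan to establish this directly via the linear map $w\mapsto\bigoplus_i w_i\,\partial f_k/\partial x_i$ and an induction on $n$ is sound and makes the argument self-contained; your two cases (split $f_k=x_ng\oplus h$, recurse on $h$ if $h\ne 0$, otherwise on $g$) work, provided the induction is over all degrees simultaneously so that the $h=0$ case can invoke the hypothesis for $g$ of degree $k-1$. In fact the equality characterization can be handled more directly than you anticipate: once $\dim W=n-k$, a linear change of coordinates sending $W$ to $\langle \mathbf{e}_{k+1},\ldots,\mathbf{e}_n\rangle$ forces $\partial f_k/\partial x_i=0$ for $i>k$, so $f_k$ involves only $x_1,\ldots,x_k$ and hence equals $x_1\cdots x_k$, with no further induction needed.
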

\begin{proof} The proof will be by induction on $k$.
 We consider first the case when $k=1$. Recall that the normalised Hamming weight of  $f$ is defined as the proportion of its inputs that produce non-zero outputs, i.e.:
\[\wt(f) = \frac{|\{x\in \f_2^n: f(x)\neq 0\}|}{2^n}.\]
The $\deg(f)<1$ test is $f(u)-f(0)=0$.
  The probability of failing this test, over all $u\in \f_2^n$ is equal to $\wt(f)$ if $f(0)=0$ and it is equal to $1-\wt(f)$ if $f(0)=1$. Since $f$ has degree 1, i.e. it is an affine non-constant function, its weight is $\frac{1}{2}$. So $\dt_1(f)=\frac{1}{2}$.

 Now consider an arbitrary degree $k$ and assume the statement holds for degrees less than $k$. Recall that the discrete derivative of $f$ in a direction $a\in \f_2^n$ is defined as $\D_{a}f(x) = f(x\oplus a) \oplus f(x)$ (usually the case $a = \mathbf{0}$ is excluded, but here we will allow it, and obviously the derivative is identically zero when $a = \mathbf{0}$).

 The $\deg(f)<k$ test on $f$ at  $u_1, \ldots, u_k$
 \[
 \dbigoplus  _{c_1,\ldots,c_k \in \f_2} f(\dbigoplus _{i=1}^{k} c_iu_i) = 0
 \]
 can be rewritten as
 \[
 \dbigoplus  _{c_1,\ldots,c_{k-1} \in \f_2} \D_{u_k} f(\dbigoplus _{i=1}^{k-1} c_iu_i) = 0,
 \]
 which is the $\deg(\D_{u_k} f)< k-1$ test at  $u_1, \ldots, u_{k-1}$. We have therefore
 \begin{equation}\label{eq:induction-for bounds1}
 \dt_k(f) = \frac{1}{2^n}\sum_{u_k\in \f_2^n} \dt_{k-1}( \D_{u_k} f ).
 \end{equation}

Recall that $\deg(\D_{a})\le \deg(f)-1$ (see~\cite{Lai94}). A vector $a\in \f_2^n\setminus\{\mathbf{0}\}$ is called a \emph{fast point} for $f$ if $\deg(\D_{a})< \deg(f)-1$ (\cite{DuaLai11}). In~\cite{SalMan17} it was shown that the number of fast points for a function $f$ of degree $k$ in $n$ variables can vary from zero to at most $2^{n-k}-1$, the latter being achieved if and only if $f \sim_{k-1} x_1\cdots x_k$.
 Let us denote by $S(f)$ the vectors in $\f_2^n\setminus\{\mathbf{0}\}$ that are not fast points for $f$. We have therefore $2^n-2^{n-k} \le |S(f)| \le 2^n-1$, i.e.
 \begin{equation}\label{eq:induction-for-bounds3}
    \left(1-\frac{1}{2^k}  \right) \le  \frac{1}{2^n}|S(f)| \le  \left(1-\frac{1}{2^n}  \right).
 \end{equation}

Back to the equation~\eqref{eq:induction-for bounds1}, using the fact that whenever $u_k$ is a fast point for $f$ we have $\deg(\D_{u_k} f)<k-1$ and therefore $\dt_{k-1}(\D_{u_k} f)=0$,
we have that
\begin{equation}\label{eq:induction-for bounds2}
 \dt_k(f) = \frac{1}{2^n}\sum_{u_k\in S(f)} \dt_{k-1}( \D_{u_k} f ).
 \end{equation}
 Since $\deg(\D_{u_k} f)=k-1$ for any  $u_k\in S(f)$, by the induction hypothesis
 \[
 \dprod_{i=1}^{k-1} \left(1-\frac{1}{2^i}  \right) \le \dt_{k-1}(\D_{u_k} f) \le \frac{1}{2}\left(1-\frac{1}{2^n} \right)^{k-2}.
 \]
 Combining this with equation~\eqref{eq:induction-for bounds2} we obtain
 \[
 \frac{|S(f)|}{2^n}\dprod_{i=1}^{k-1} \left(1-\frac{1}{2^i}  \right) \le  \dt_k(f) \le \frac{|S(f)|}{2^n}\frac{1}{2}\left(1-\frac{1}{2^n} \right)^{k-2}.
 \]
and finally using~\eqref{eq:induction-for-bounds3} we obtain the bounds in the theorem's statement.

 Note that the lower bound is achieved  when $f\sim_{k-1} x_1\cdots x_k$.

\end{proof}
\begin{ex}
For functions $f$ in 8 variables, Theorem~\ref{thm:bounds} above shows that when $f$ has degree 4 we have $0.307617 \le \dt_4(f)\le 0.4941635$; when $f$ has degree 3 we have $0.328125 \le \dt_3(f)\le 0.496101379$.
\end{ex}
If we are interested in lower and upper bounds for $\dt_k(f)$ which do not depend on either $k$ or the number of variables $n$, Theorem~\ref{thm:bounds} implies:
\begin{cor} \label{cor:bounds}
Let $f$ be a function of degree $k\ge 1$. Then
\[
\prod_{i=1}^{\infty}\left(1 - \frac{1}{2^i} \right)   \le \dt_k(f) < \frac{1}{2}.
\]
The lower bound is the $q$-Pochhammer symbol at $(0.5, 0.5, \infty)$ and is equal to $0.288788\ldots$.
\end{cor}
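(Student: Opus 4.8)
The plan is to derive both inequalities of Corollary~\ref{cor:bounds} directly from Theorem~\ref{thm:bounds}, simply by stripping away the dependence on $n$ and $k$.

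For the lower bound, the first step is to record that every factor $1-2^{-i}$ lies strictly in $(0,1)$, so the finite products $P_k:=\prod_{i=1}^{k}\left(1-2^{-i}\right)$ are strictly decreasing in $k$; being also bounded below by $0$, they converge to $P_\infty:=\prod_{i=1}^{\infty}\left(1-2^{-i}\right)$ and satisfy $P_k\ge P_\infty$ for all $k\ge 1$. I would then note that $P_\infty>0$: this is the standard fact that $\prod_i(1-a_i)$ converges to a nonzero limit when $\sum_i a_i<\infty$, and here $\sum_{i\ge 1}2^{-i}=1$; concretely, using $\log(1-x)\ge -2x$ for $0\le x\le \tfrac12$ gives $\log P_\infty\ge -2\sum_{i\ge 1}2^{-i}=-2$, so $P_\infty\ge e^{-2}>0$. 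Combining $P_k\ge P_\infty$ with the lower bound $\dt_k(f)\ge P_k$ of Theorem~\ref{thm:bounds} yields $\dt_k(f)\ge P_\infty$. It remains to identify $P_\infty$ with the $q$-Pochhammer symbol: by definition $(a;q)_\infty=\prod_{i=0}^{\infty}(1-aq^i)$, so $(\tfrac12;\tfrac12)_\infty=\prod_{i=0}^{\infty}\bigl(1-\tfrac12\cdot(\tfrac12)^i\bigr)=\prod_{i=1}^{\infty}(1-2^{-i})=P_\infty$, whose numerical value is $0.288788\ldots$.

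For the upper bound, the step is immediate: since $0<1-2^{-n}\le 1$, we have $\left(1-2^{-n}\right)^{k-1}\le 1$ for every $k\ge 1$, so Theorem~\ref{thm:bounds} gives $\dt_k(f)\le \tfrac12\left(1-2^{-n}\right)^{k-1}\le \tfrac12$. Here one should observe that the inequality $\dt_k(f)<\tfrac12$ is strict whenever $k\ge 2$ (as then $(1-2^{-n})^{k-1}<1$ for finite $n$), the remaining case $k=1$ giving the boundary value $\dt_1(f)=\tfrac12$ computed inside the proof of Theorem~\ref{thm:bounds}; this is exactly why the abstract records the range as the half-open interval $(0.288788,0.5]$.

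There is no genuine obstacle here, since all the work is already done in Theorem~\ref{thm:bounds}; the only point deserving a line of justification is that the infinite product $\prod_{i=1}^{\infty}(1-2^{-i})$ converges to a strictly positive constant and coincides with the stated $q$-Pochhammer value, which follows from the elementary convergence criterion for infinite products recalled above.
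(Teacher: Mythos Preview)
Your proposal is correct and follows the same approach as the paper, which simply records that the corollary is implied by Theorem~\ref{thm:bounds} without further detail. Your treatment is in fact more thorough: you justify the convergence and positivity of the infinite product, and you correctly observe the boundary case $k=1$ where $\dt_1(f)=\tfrac12$, which explains the half-open interval $(0.288788,0.5]$ quoted in the abstract even though the corollary as stated writes a strict inequality.
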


\section{Numerical results}\label{sec:numeric}
We computed the values of $\dt_k(f)$ and $\add_k(f)$ for all functions of degree $k\ge 1$ in 8 variables. Due to the invariance to $\sim_{k-1}$, it suffices to compute these values for one representative from each class.

The cases $k=1,2$ are trivial. Namely, for $k=1$ there is only one equivalence class, with representative $f(x_1, \ldots, x_8) = x_1$, and we have $\dt_1(f) = \frac{1}{2}$ (see the first part of the proof of Theorem~\ref{thm:bounds}).
For degree $k=2$ there are four equivalence classes, corresponding to $x_1x_2, x_1x_2 + x_3x_4, x_1x_2 + x_3x_4 + x_5x_6$, and $x_1x_2 + x_3x_4 + x_5x_6 + x_7x_8$. Using Propositions~\ref{prop:basic-prop1} and~\ref{prop:basic-prop2}  we can compute $\dt_2(f)$ as being 0.375000, 0.468750, 0.492188 and  0.498047 respectively.

For degree $k=3$ we used the 31 representatives of equivalence classes of polynomials in 8 variables listed in~\cite{Hou96}. For degree $k=4$ we used the 998 equivalence classes of polynomials of degree $k=4$ in 8 variables listed in~\cite{LanLea07}.

For each function $f$ of degree $k$, we computed $\add_k(f)$ by picking one basis $u_1,\ldots,u_k$  for each of the $\binom{8}{k}_2$ vector spaces of $\f_2^8$ of dimension $k$, and then running the $\deg(f)<k$ test for each such basis.
 We then computed $\dt_k(f)$ using Theorem~\ref{thm:relation-between-the-two-notions}.

\begin{table}
\footnotesize
\caption{Test results for degree 3 in 8 variables}
\label{Hou38}
\centering
 \begin{tabular}{|l|l|l|l|}
\hline
  &  {Function $f$}   &     {$\add_3(f)$} & {$\dt_3(f)$}\\
\hline

$f_2$&$x_1x_2x_3$&0.337275&0.328125\\
$f_3$&$x_1x_2x_5\oplus x_3x_4x_5$ &0.421594&0.410156\\
$f_7$&$x_1x_2x_7\oplus x_3x_4x_7\oplus x_5x_6x_7$ &0.442674&0.430664\\
$f_4$&$x_1x_2x_3\oplus x_4x_5x_6$ &0.453213&0.440918\\
$f_5$&$x_1x_2x_3\oplus x_2x_4x_5\oplus x_3x_4x_6$&0.463753&0.451172\\
$f_6$&$x_1x_2x_3\oplus x_1x_4x_5\oplus x_2x_4x_6\oplus x_3x_5x_6\oplus x_4x_5x_6$ &0.474293&0.461426\\
$f_8$&$x_1x_2x_3\oplus x_4x_5x_6\oplus x_1x_4x_7$ &0.474293&0.461426\\
$f_{13}$&$x_1x_2x_3 \oplus  x_4x_5x_6 \oplus  x_1x_7x_8$&0.482198&0.469116\\
$f_9$&$x_1x_2x_3\oplus x_2x_4x_5\oplus x_3x_4x_6\oplus x_1x_4x_7$ &0.484833&0.471680\\
$f_{10}$&$x_1x_2x_3\oplus x_4x_5x_6\oplus x_1x_4x_7\oplus x_2x_5x_7$ &0.484833&0.471680\\
$f_{16}$&$x_1x_2x_3 \oplus  x_2x_4x_5 \oplus  x_3x_4x_6 \oplus  x_3x_7x_8$&0.484833&0.471680\\
$f_{12}$&$x_1x_2x_3\oplus x_1x_4x_5\oplus x_2x_4x_6\oplus x_3x_5x_6\oplus x_4x_5x_6\oplus x_1x_6x_7\oplus x_2x_4x_7$&0.490103&0.476807\\
$f_{14}$&$x_1x_2x_3 \oplus  x_4x_5x_6 \oplus  x_1x_7x_8 \oplus  x_4x_7x_8$&0.490103&0.476807\\
$f_{29}$&$x_1x_2x_3 \oplus  x_4x_5x_6 \oplus  x_1x_4x_7 \oplus  x_3x_6x_8$&0.490103&0.476807\\
$f_{15}$&$x_1x_2x_3 \oplus  x_2x_4x_5 \oplus  x_6x_7x_8 \oplus  x_1x_4x_7$ &0.492738&0.479370\\
$f_{11}$&$x_1x_2x_3\oplus x_1x_4x_5\oplus x_2x_4x_6\oplus x_3x_5x_6\oplus x_4x_5x_6\oplus x_1x_6x_7$ &0.495373&0.481934\\
$f_{17}$&$x_1x_2x_3 \oplus  x_1x_4x_5 \oplus  x_2x_4x_6 \oplus  x_3x_5x_6 \oplus  x_4x_5x_6 \oplus  x_1x_7x_8$ &0.495373&0.481934\\
$f_{24}$&$x_1x_2x_3 \oplus  x_1x_4x_5 \oplus  x_2x_4x_6 \oplus  x_3x_5x_6 \oplus  x_4x_5x_6 \oplus  x_1x_6x_7 \oplus  x_5x_6x_8$&0.495373&0.481934\\
$f_{28}$&$x_1x_2x_7 \oplus  x_3x_4x_7 \oplus  x_5x_6x_7 \oplus  x_2x_5x_8 \oplus  x_3x_6x_8$&0.495373&0.481934\\
$f_{31}$&$x_1x_2x_3 \oplus  x_4x_5x_6 \oplus  x_1x_4x_7 \oplus  x_3x_6x_8 \oplus  x_4x_7x_8 \oplus  x_5x_6x_8$&0.495373&0.481934\\
$f_{18}$&$x_1x_2x_3 \oplus  x_1x_4x_5 \oplus  x_2x_4x_6 \oplus  x_3x_5x_6 \oplus  x_4x_5x_6 \oplus  x_1x_6x_7 \oplus  x_2x_3x_8$&0.498008&0.484497\\
$f_{19}$&$x_1x_2x_3 \oplus  x_1x_4x_5 \oplus  x_2x_4x_6 \oplus  x_3x_5x_6 \oplus  x_4x_5x_6 \oplus  x_1x_5x_8 \oplus  x_2x_3x_7 \oplus  x_6x_7x_8$ &0.498008&0.484497\\
$f_{26}$&$x_1x_2x_3 \oplus  x_4x_5x_6 \oplus  x_1x_4x_7 \oplus  x_2x_5x_7 \oplus  x_2x_6x_8 \oplus  x_2x_7x_8 \oplus  x_3x_4x_8$ &0.498008&0.484497\\
$f_{30}$&$x_1x_2x_3 \oplus  x_4x_5x_6 \oplus  x_1x_4x_7 \oplus  x_3x_6x_8 \oplus  x_5x_7x_8$&0.498008&0.484497\\
$f_{22}$&$x_1x_2x_3 \oplus  x_2x_3x_4 \oplus  x_3x_4x_5 \oplus  x_4x_5x_6 \oplus  x_5x_6x_7 \oplus  x_6x_7x_8$ &0.499326&0.485779\\
&$\oplus  x_1x_2x_8 \oplus  x_2x_3x_8 \oplus  x_3x_4x_8 \oplus  x_4x_5x_8 \oplus  x_5x_6x_8 \oplus  x_1x_7x_8$&&\\
$f_{21}$&$x_1x_4x_5 \oplus  x_2x_4x_6 \oplus  x_3x_5x_6 \oplus  x_4x_5x_6 \oplus  x_2x_7x_8 \oplus  x_3x_4x_7 \oplus  x_1x_6x_8$  &0.500643&0.487061\\
&$\oplus  x_2x_3x_7 \oplus  x_1x_4x_7$&&\\
$f_{23}$&$x_1x_2x_3 \oplus  x_1x_4x_5 \oplus  x_2x_4x_6 \oplus  x_3x_5x_6 \oplus  x_4x_5x_6 \oplus  x_1x_6x_7 \oplus  x_5x_7x_8$&0.500643&0.487061\\
$f_{25}$&$x_1x_2x_3 \oplus  x_1x_4x_5 \oplus  x_2x_4x_6 \oplus  x_3x_5x_6 \oplus  x_4x_5x_6 \oplus  x_1x_6x_7 \oplus  x_3x_4x_8$ &0.500643&0.487061\\
$f_{32}$&$x_1x_2x_3 \oplus  x_4x_5x_6 \oplus  x_1x_4x_7 \oplus  x_1x_6x_8 \oplus  x_2x_5x_8 \oplus  x_3x_4x_8$&0.500643&0.487061\\
$f_{20}$&$x_1x_2x_3 \oplus  x_1x_4x_5 \oplus  x_2x_4x_6 \oplus  x_3x_5x_6 \oplus  x_4x_5x_6 \oplus  x_2x_7x_8 \oplus  x_3x_4x_7 \oplus  x_1x_6x_8$&0.501960&0.488342\\
$f_{27}$&$x_1x_2x_3 \oplus  x_4x_5x_6 \oplus  x_1x_4x_7 \oplus  x_2x_5x_7 \oplus  x_1x_6x_8 \oplus  x_1x_7x_8 \oplus  x_2x_4x_8 \oplus  x_3x_5x_8$  &0.503278&0.489624\\

\hline
\end{tabular}
\end{table}

Table~\ref{Hou38} lists the values of $\add_3(f)$ and $\dt_3(f)$ for each of the 31 non-zero representatives of degree 3 in 8 variables; they are listed in increasing order of $\dt_3(f)$. The values of $\dt_3(f)$ range from 0.328125 to 0.489626, with all but one polynomial (namely the one that consists of a single monomial) having $\dt_3(f)$ in the interval $[0.4, 0.5]$. As expected from Theorem~\ref{thm:bounds}, there are no values over 0.5.
We note that there are only 16 different values; some classes do have the same value of $\dt_3(f)$.

For the 998 polynomials of degree 4 in 8 variables, there are 54 different values of $\dt_4(f)$, ranging from 0.307617 to 0.480051, with all polynomials having $\dt_4(f)$ in the interval $[0.4, 0.5]$ except for two polynomial classes, namely the class of a monomial, e.g.\ $x_1x_2x_3x_4$ and the class of $x_1x_2(x_3x_4 \oplus  x_5x_6)$, which have the values 0.307617 and 0.384521, as expected from Corollary~\ref{cor:dt-simple-functions}.
As expected from Theorem~\ref{thm:bounds}, there are no values over 0.5. Histograms are given in Figures \ref{fig:Langevin48_dtf_Histogram} and \ref{fig:Langevin48Chart_dtf} in the Appendix. The first histogram shows, for equally sized intervals, the number of polynomials that have $\dt_k(f)$ in that interval. We notice that the vast majority of classes have $\dt_4(f)$ between 0.462808 and 0.480051. The second histogram shows, for each possible value of $\dt_4(f)$, the number of classes that have that particular value.  This allows us to see whether $\dt_k(f)$ could be used to distinguish between equivalence classes (recall that $\dt_k(f)$ is invariant to the equivalence $\sim_{k-1}$). Given $f,g$, if $\dt_k(f)\neq \dt_k(g)$, we know that $f$ and $g$ are inequivalent. However, if $\dt_k(f)= \dt_k(g)$ we are unable to use this invariant to decide whether $f$ and $g$ are equivalent or not. Invariants where not many classes have the same value of the invariant are therefore preferable for this purpose (several invariants have been used in the classification of~\cite{Hou96} and \cite{LanLea07}). Unfortunately, as seen from Figure~\ref{fig:Langevin48Chart_dtf}, $\dt_k(f)$ is not particularly suited for distinguishing classes, as there are many classes with the same value.

For degree $k=5$ there are again 31 representatives, which are obtained as $f^c$ with $f$ running through the 31 degree 3 representatives mentioned above. Using Proposition~\ref{prop:basic-prop2}(ii), one can compute $\dt_5(f^c) = \dt_3(f)\left(1 - \frac{1}{2^4} \right)\left(1 - \frac{1}{2^5} \right) = 0.96875\dt_3(f)$. Similarly, for degree $k=6$ there are 4 representatives, which are obtained from the degree 2 representatives by $\dt_6(f^c) = \dt_2(f)\prod_{i=3}^{6}\left(1 - \frac{1}{2^i} \right) = 0.984375\dt_2(f)$. Finally, for each of the degrees 7 and 8 there is only one class, with representatives $x_1\cdots x_7$ and $x_1\cdots x_8$, respectively. Using Corollary~\ref{cor:dt-simple-functions} we have that $\dt_k(f)$ is 0.291056
 and 0.289919, respectively.

\bibliographystyle{plain}

\section*{Appendix}

\begin{landscape}
\begin{figure}[h!]
\caption{\small Histogram of $\dt_4(f)$ values of polynomials of degree 4 in 8 variables}
\label{fig:Langevin48_dtf_Histogram}
\includegraphics[scale=0.9]{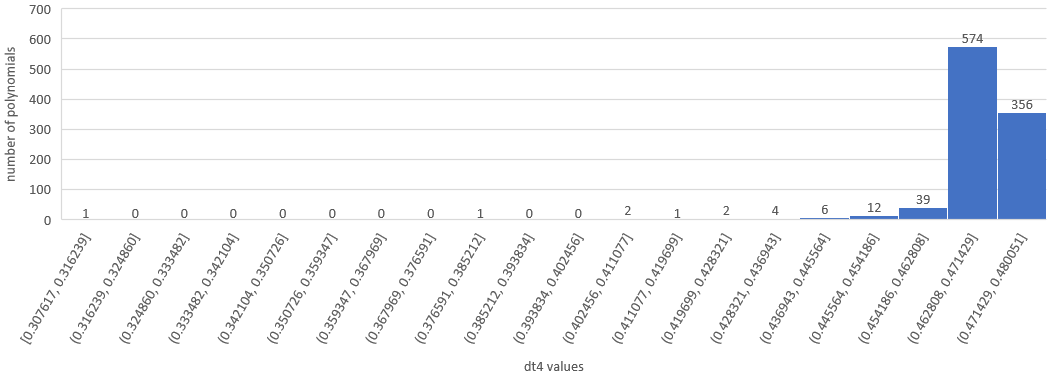}
\end{figure}
\end{landscape}

\begin{figure}[h!]
\caption{\small Number of polynomials of degree 4 in 8 variables per $\dt_4(f)$ value}
\label{fig:Langevin48Chart_dtf}
\includegraphics[scale=0.8]{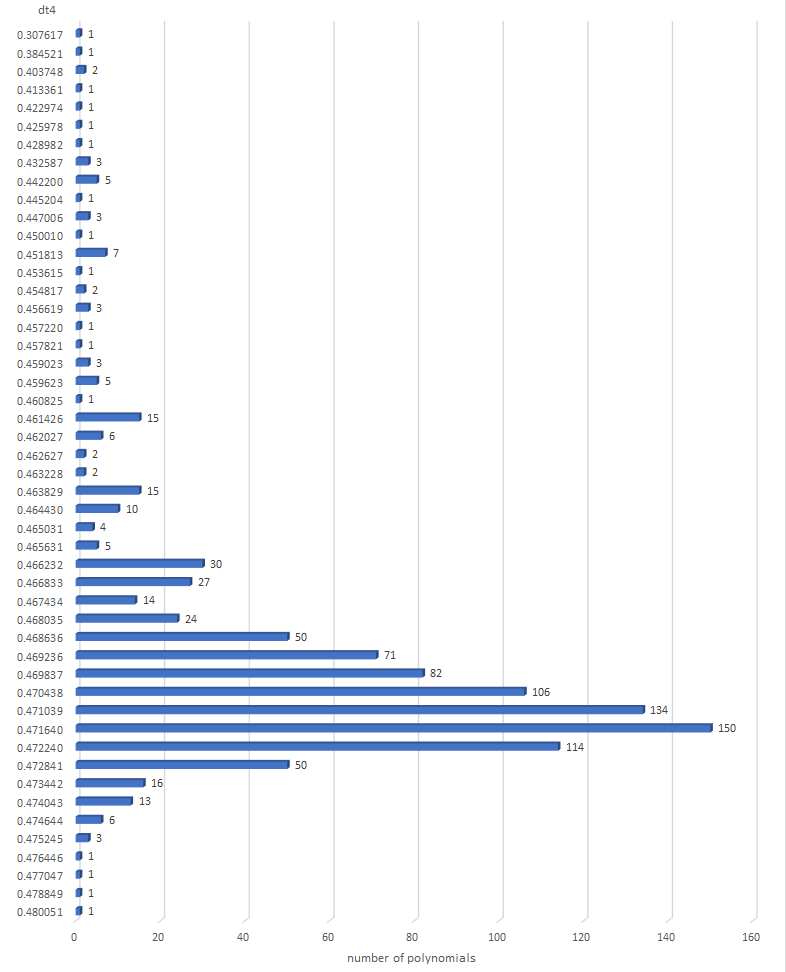}
\end{figure}

\end{document}